\long\def\longdelete#1{}
\newtheorem{theorem}{Theorem}[]
\newtheorem{definition}{Definition}
\newtheorem{invariant}{Invariant}
\newtheorem{lemma}{Lemma}
\def\max{{\rm max}}
\def\min{{\rm min}}
\title{\textbf{An $O(1)$-Approximation Algorithm for Dynamic Weighted Vertex Cover with Soft Capacity}}
    \author[1]{Hao-Ting Wei \thanks{Department of Industrial Engineering and Engineering Management,
    National Tsing Hua University, Hsinchu 30013,Taiwan. Email: s104034526@m104.nthu.edu.tw}}
    \author[2]{Wing-Kai Hon\thanks {Department of Computer Science,
    National Tsing Hua University, Hsinchu 30013, Taiwan. Email: wkhon@cs.nthu.edu.tw}}
    \author[3]{Paul Horn\thanks {University of Denver, Denver, USA. Email: paul.horn@du.edu}}
    \author[1]{Chung-Shou Liao\thanks {Department of Industrial Engineering and Engineering Management,
    National Tsing Hua University, Hsinchu 30013,Taiwan. Email: csliao@ie.nthu.edu.tw}}
    \author[4]{\protect\\ Kunihiko Sadakane\thanks{Department of Mathematical Informatics,
    The University of Tokyo, Tokyo, Japan. Email: sada@mist.i.u-tokyo.ac.jp}}
\author{%
Hao-Ting Wei\thanks{Department of Industrial Engineering and Engineering Management,
    National Tsing Hua University, Hsinchu 30013,\\Taiwan. \texttt{Email:s104034526@m104.nthu.edu.tw}.}%
\quad Wing-Kai Hon\thanks{Department of Computer Science,
    National Tsing Hua University, Hsinchu 30013, Taiwan. \texttt{Email:wkhon@cs.nthu.edu.tw}.}%
\quad Paul Horn\thanks{University of Denver, Denver, USA. \texttt{Email:paul.horn@du.edu}.}%
\\ \quad Chung-Shou Liao\thanks{Department of Industrial Engineering and Engineering Management,
    National Tsing Hua University, Hsinchu 30013,\\Taiwan. \texttt{Email:csliao@ie.nthu.edu.tw}.}%
\quad Kunihiko Sadakane\thanks{Department of Mathematical Informatics,
    The University of Tokyo, Tokyo, Japan. \texttt{Email:sada@mist.i.u-tokyo.ac.jp}.}%
}
    \date{}
\begin{document}
\maketitle
\setcounter{page}{0}
\thispagestyle{empty}
\begin{abstract}
This study considers the (\emph{soft}) capacitated vertex cover problem in a dynamic setting.
This problem generalizes the dynamic model of the vertex cover problem, which has been intensively studied in recent years.
Given a dynamically changing vertex-weighted graph $G=(V,E)$,
which allows edge insertions and edge deletions,
the goal is to design a data structure that maintains an approximate minimum vertex cover while satisfying the capacity constraint of each vertex.
That is, when picking a copy of a vertex $v$ in the cover, the number of $v$'s incident edges
covered by the copy is up to a given capacity of $v$.
We extend Bhattacharya et al.'s work [SODA'15 and ICALP'15]
to obtain a deterministic primal-dual algorithm for maintaining
a constant-factor approximate
minimum capacitated vertex cover with $O(\log n / \epsilon)$ amortized update time,
where $n$ is the number of vertices in the graph.
\longdelete{
The approximation algorithm
yields the same ratio
in a generalized dynamic
}
The algorithm can be extended to (1) a more general model in which
each edge is associated with a non-uniform and unsplittable demand,
and (2) the more general capacitated set cover problem.
\longdelete{

In particular,
we make the data structure more flexible and consider a generalized
model in which every edge is associated with a
nonuniform
demand, each to be assigned to an
incident
vertex.
}
\end{abstract}

\newpage

\section{Introduction}

Dynamic algorithms have received fast-growing attention in the past decades,
especially for some classical combinatorial optimization problems
such as connectivity~\cite{AT,DI,HL}, vertex cover, and maximum matching~\cite{BGS,BCH1,BCH2,BHI1,NS,OR,PS,SS}.
This paper focuses on the fully dynamic model of the vertex cover problem,
which has been intensively studied in recent years.
Given a vertex-weighted graph $G=(V,E)$ which is constantly updated
due to a sequence of edge insertions and edge deletions,
the objective is to maintain a subset of vertices $S \subseteq V$ at any given time, such that
every edge is incident to at least one vertex in~$S$ and
the weighted sum of $S$
is minimized.
We consider a generalization of the problem,
where each vertex is associated with a given capacity.
When picking a copy of a vertex $v$ in~$S$, the number of its incident edges that
can be covered by such a copy is bounded by $v$'s given capacity.
The objective is to find a (\emph{soft}) capacitated weighted vertex cover~$S$
with minimum weight, i.e. $\sum_{v \in S} c_v x_v$ is minimized,
as well as an assignment of edges
such that the number of edges assigned to a vertex $v$ in $S$ is at most $k_v x_v$,
where $c_v$ is the cost of~$v$, $k_v$ is the capacity of $v$,
and $x_v$ is the number of selected copies of $v$ in $S$.
The static model of this generalization is the so-called \emph{capacitated vertex cover} problem, introduced by Guha et al.~\cite{GHKO}.

\medskip

\noindent
{\bf Prior work.}
For the vertex cover problem in a dynamic setting,
Ivkovic and Lloyd~\cite{ZL}
presented the pioneering work
wherein their fully dynamic algorithm
maintains a 2-approximation factor to vertex cover
with $O((n + m)^{0.7072})$ update time,
where $n$ is the number of vertices and $m$ is the number of edges.
Onak and Rubinfeld~\cite{OR} designed a randomized data structure that
maintains a large constant approximation ratio with $O(\log^2 n)$ amortized update time in expectation;
this is the first result that achieves a constant approximation factor with
polylogarithmic update time.
Baswana, Gupta, and Sen~\cite{BGS} designed another randomized data structure which
improves the approximation ratio to two, and simultaneously improved
the amortized update time to $O(\log n)$.
Recently,
Solomon~\cite{SS} gave the currently best randomized algorithm, which
maintains a 2-approximate vertex cover with $O(1)$ amortized update time.

\begin{table*}[h]
\normalsize
\begin{center}
\renewcommand{\arraystretch}{1}
\begin{tabular}{|c|c|c|c|l|}
\hline
\multirow{ 2}{*}{\bf Problem}& {\bf Approx.}& \multirow{ 2}{*}{\bf Update Time}&{\bf Data} &\multicolumn{1}{c|}{\multirow{2}{*}{\bf Reference}}\\
&{\bf Guarantee} & &{\bf Structure}&\\
\hline
UMVC & $O(1)$ &$O(\log^2 n)$ amortized & randomized &\hspace{1pt} STOC'10 \cite{OR}\\
\hline
UMVC & 2 &$O(\log n)$ amortized & randomized&\hspace{1pt} FOCS'11 \cite{BGS}\\
\hline
UMVC & 2 &$O(1)$ amortized & randomized&\hspace{1pt} FOCS'16 \cite{SS}\\
\hline
\hline
UMVC & 2 &$O(\sqrt{m})$ worst-case& deterministic &\hspace{1pt} STOC'13 \cite{NS}\\
\hline
UMVC & $2+\epsilon$ & $O(\log n / \epsilon^2)$ amortized & deterministic &\hspace{1pt} SODA'15 \cite{BHI1}\\
\hline
UMVC  & $2+\epsilon$ & $O(\gamma / \epsilon^2)$ worst-case & deterministic &\hspace{1pt} SODA'16 \cite{PS}\\
\hline
UMVC & $2+\epsilon$ &$O(\log^3 n)$ worst-case& deterministic&\hspace{1pt} SODA'17 \cite{BCH1}\\
\hline
WMVC & $2+\epsilon$ & $O(\log n / \epsilon^2 )$ amortized & deterministic &\multicolumn{1}{c|}{This paper}\\
\hline
\hline
UMSC & $O(f^3)$ & $O(f^2)$ amortized & deterministic &\hspace{1pt} IPCO'17 \cite{BCH2}\\
\hline
WMSC & $O(f^2)$ & $O(f\log(n+m) / \epsilon^2)$ amortized & deterministic &\hspace{1pt} ICALP'15 \cite{BHI2}\\
\hline
\multirow{ 2}{*}{WMSC}& $O(f^3)$ & $O(f^2)$ amortized & \multirow{ 2}{*}{deterministic} &\hspace{1pt} \multirow{ 2}{*}{STOC'17 \cite{GKKP}}\\
& $O(\log n)$& $O(f\log n)$ amortized& &\\
\hline
\hline
WMCVC& $ O(1) $ & $O(\log n / \epsilon )$ amortized &deterministic &\multicolumn{1}{c|}{This paper}\\
\hline
WMCSC& $ O(f^2) $ & $O(f\log (n+m) / \epsilon )$ amortized &deterministic &\multicolumn{1}{c|}{This paper}\\
\hline
\end{tabular}
\end{center}
\caption{
Summary of results for
unweighted (resp. weighted) minimum vertex cover (UMVC (resp. WMVC)),
unweighted (resp. weighted) minimum set cover (UMSC (resp. WMSC)), where $f$ is the maximum frequency of an element,
and weighted minimum capacitated vertex (resp. set) cover
(WMCVC (resp. WMCSC))
}\label{table1}
\end{table*}

For deterministic data structures,
Onak and Rubinfeld~\cite{OR} presented a data structure that maintains an
$O(\log n)$-approximation algorithm with $O(\log^2 n)$ amortized update time.
Bhattacharya~et~al.~\cite{BHI1}
proposed the first deterministic data structure that maintains a
constant ratio, precisely, a $(2+\epsilon)$-approximation to vertex cover
with polylogarithmic $O(\log n / \epsilon^2)$ amortized updated time.
Existing work also considered the worst-case update time.
Neiman and Solomon~\cite{NS} provided a 2-approximation dynamic algorithm
with $O(\sqrt{m})$ worst-case update time.
Later, Peleg and Solomon~\cite{PS}
improved the worst-case update time to $O( \gamma / \epsilon^2)$,
where $\gamma$ is the arboricity of the input graph.
Very recently,
Bhattacharya et al.~\cite{BCH1} extended their hierarchical data structure to
achieve the currently best worst-case update time of $O(\log^3 n)$.
Note that the above studies only discussed the unweighted vertex cover problem,
the objective of which is to find a vertex cover with minimum cardinality.

Consider the dynamic (weighted) set cover problem.
Bhattacharya et al.~\cite{BHI2} used a hierarchical data structure similar to that reported in~\cite{BHI1},
and achieved a scheme with $O(f^2)$-approximation ratio and
$O(f \log(n+m) / \epsilon^2 )$ amortized updated time,
where $f$ is the maximum frequency of an element.
Very recently, Gupta et al.~\cite{GKKP} improved the amortized update time to
$O(f^2)$,
albeit the dynamic algorithm achieves a higher approximation ratio of~$O(f^3)$.
They also offered another $O(\log n)$-approximation dynamic algorithm in
$O(f\log n)$ amortized update time.
Bhattacharya~et~al.~\cite{BCH2}
simultaneously derived the same outcome with $O(f^3)$-approximation ratio and $O(f^2)$ amortized update time
for the unweighted set cover problem.
Table~\ref{table1} presents a summary of the above results.

\medskip

\noindent
{\bf Our contribution.}
In this study we investigate the (\emph{soft}) capacitated vertex cover problem in the dynamic setting,
where there is no bound on the number of copies of each vertex that can be selected.
We refer to
the primal-dual technique reported in~\cite{GHKO},
and present the first deterministic algorithm for this problem, which
can maintain
an $O(1)$-approximate minimum capacitated (weighted) vertex cover
with $O(\log n / \epsilon)$ amortized update time.
The algorithm can be extended to a more general model in which
each edge is associated with a given demand, and the demand has to be assigned to an
incident vertex.
That is, the demand of each edge is \emph{nonuniform}
and \emph{unsplittable}.
Also, it can be extended to solve the more general capacitated set cover problem,
where the input graph is a hyper-graph, and each edge may connect to multiple vertices.

The proposed dynamic mechanism builds on Bhattacharya et al.'s $(\alpha,\beta)$-partition structure~\cite{BHI1,BHI2},
but a \emph{careful adaptation} has to be made to cope with the newly introduced capacity constraint.
Briefly, applying the \emph{fractional matching} technique in Bhattacharya~et~al.'s algorithm
cannot directly lead to a constant approximation ratio
in the capacitated vertex cover problem.
The crux of our result is the re-design of a key parameter, \emph{weight} of a vertex, in the dual model.
Details of this modification are shown in the next section.

\longdelete{ 
It's noted that the proposed primal-dual dynamic algorithm can be simply extended to the capacitated \emph{set cover} problem
with $O(f^2)$-approximation ratio in $O(f\log (n+m) / \epsilon)$ amortized update time,
where $f$ is the maximum frequency.
}

In addition, if we go back to the original vertex cover problem without capacity constraint,
the proposed algorithm is able to resolve the \emph{weighted} vertex cover problem
by maintaining a $(2+\epsilon)$-approximate weighted vertex cover
with $O(\log n / \epsilon^2)$ amortized update time.
This result achieves the same approximation ratio as the algorithm in~\cite{BHI1},
but they considered the unweighted model.
Details of
this discussion are presented in the end of Section~3.

\subsection{Overview of our technique}

First,
we recall the mathematical model of the capacitated vertex cover problem which was first introduced by Guha et al.~\cite{GHKO}.
In this model, $y_{ev}$ serves as a binary variable that indicates whether an edge $e$ is covered by a vertex $v$.
Let $N_v$ be the set of incident edges of $v$, $k_v$ and $c_v$ be the capacity and the cost of a vertex $v$, respectively.
Let $x_v$ be the number of selected copies of a vertex $v$.
An 
integer program (IP) model of the problem can be formulated as follows (the minimization program on the left):

\longdelete{
\[\begin{array}{lllr}
{\bf Min} & \sum_{v} c_v x_v && \\
&&&\\
{\bf s.t}
& y_{ev} + y_{eu} \geq 1,  & \quad \forall e=\{u,v\}\in E & \\
& k_v x_v - \sum_{e\in N_v} y_{ev} \geq 0, & \quad \forall v\in V  & \\
& x_v \geq y_{ev}, & \quad \forall v\in e, \forall e \in E  & \\
& y_{ev}\in \{ 0, 1 \}, & \quad \forall v\in e, \forall e \in E  & \\
& x_{v}\in \mathbb{N}, & \quad \forall v\in V & \\
\end{array}
\]
}

\begin{minipage}[c]{0.49\linewidth}
\begin{small}
\[\begin{array}{lllr}\rule{-0.8cm}{0ex}
{\bf Min} & \sum_{v} c_v x_v && \\
&&&\\
\rule{-0.8cm}{0ex}
{\bf s.t}
& y_{ev} + y_{eu} \geq 1,  & \forall e=\{u,v\}\in E & \\
\rule{-0.8cm}{0ex}
& k_v x_v - \sum_{e\in N_v} y_{ev} \geq 0, & \forall v\in V  & \\
\rule{-0.8cm}{0ex}
& x_v \geq y_{ev}, &  \forall v\in e, \forall e \in E  & \\
\rule{-0.8cm}{0ex}
& y_{ev}\in \{ 0, 1 \}, &  \forall v\in e, \forall e \in E  & \\
\rule{-0.8cm}{0ex}
& x_{v}\in \mathbb{N}, &  \forall v\in V & \\
\end{array}
\]
\end{small}
\end{minipage}
\hspace*{-2ex}
\rule[-1.4cm]{.1pt}{3cm}
\begin{minipage}[c]{0.5\linewidth}
\begin{small}
\[\begin{array}{lllr}
{\bf Max} & \sum_{e \in E} \pi_e && \\
&&&\\
\rule{0.3cm}{0ex}
{\bf s.t}
\rule{0.3cm}{0ex}
& k_v q_v + \sum_{e\in N_v} l_{ev} \leq c_v, & \forall v \in V & \\
\rule{0.3cm}{0ex}
& q_v + l_{ev} \geq \pi_e, & \forall v\in e, \forall e \in E  & \\
\rule{0.3cm}{0ex}
& q_v \geq 0, & \forall v \in V & \\
\rule{0.3cm}{0ex}
& l_{ev} \geq 0, &  \forall v\in e,  \forall e \in E & \\
\rule{0.3cm}{0ex}
& \pi_e \geq 0, & \forall e\in E & \\
\end{array}
\]
\end{small}
\end{minipage}
If we allow a relaxation of the above primal form,
i.e., dropping the integrality constraints,
its dual problem yields
a maximization problem.
The linear program for the dual can be formulated as shown in the above (the maximization program on the right;  also see~\cite{GHKO}).
One may consider this as a variant of the \emph{packing} problem,
where we want to pack a value of $\pi_e$ for each edge $e$,
so that the sum of the packed values is maximized.
Packing of $e$ is limited by the sum of $q_v$ and $l_{ev}$,
where $q_v$ is the \emph{global} ability of a vertex $v$ emitted to $v$'s incident edges,
and $l_{ev}$ is the \emph{local} ability of $v$ distributed to its incident edge $e$.
\longdelete{
More precisely,
for $v$'s incident edges,
we increase the value of $q_v$
if the number of unassigned incident edges
is larger than $k_v$, and increase the value of $l_{ev}$
otherwise.
}
\longdelete{
\[\begin{array}{lllr}
{\bf Max} & \sum_{e \in E} \alpha_{e} && \\
&&&\\
{\bf s.t}
& k_v q_v + \sum_{e\in N_v} l_{ev} \leq c_v, & \quad \forall v \in V & \\
& q_v + l_{ev} \geq \alpha_e, & \quad \forall v\in e, \forall e \in E  & \\
& q_v \geq 0, & \quad \forall v \in V & \\
& l_{ev} \geq 0, & \quad \forall v\in e,  \forall e \in E & \\
& \alpha_{e} \geq 0, & \quad \forall e\in E & \\
\end{array}
\]
}


\longdelete{
The rationale behind
Guha et al.'s
approximation algorithm
is to increase the value of all the dual variables $\alpha_e$, i.e.,
the weight of unassigned edges simultaneously, to carefully approach~$c_v$,
in order to satisfy the constraint $k_v q_v + \sum_{e\in \delta(v)} l_{ev} \leq c_v$,
which can be used to bound its approximation guarantee.
}

In this study,
we incorporate
the above IP model with its LP relaxation
for capacitated vertex cover into
the dynamic mechanism
proposed
by Bhattacharya et al.~\cite{BHI1,BHI2}.
They devised the
\emph{weight} of a vertex $v$ (in the dual model), denoted by $W_v$,
to obtain a feasible solution in the dual problem.
They also allowed a flexible range for $W_v$ to quickly adjust the solution
for dynamic updates while preserving its approximation quality.
Due to the additional capacity constraint in our problem,
a new \emph{weight} function is obviously required.
\longdelete{
Furthermore,
we optimize the setting of Bhattacharya et al.'s $(\alpha,\beta)$-partition structure
and show that the approximation quality of our result cannot be improved
based on the same dynamic mechanism.
}

\medskip

\noindent
{\bf Technical challenges.}
There are two major differences between
our
algorithm and
Bhattacharya et al.'s~\cite{BHI1,BHI2}.
First,
the capacity constraint in the primal problem leads to
the two variables $q_v$ and $l_{ev}$ in the dual problem in which we have to balance their values
when approaching $c_v$ to maximize the dual objective.
By contrast, the previous work considered one dual variable $l_{ev}$ without the restriction on the \emph{coverage} of a vertex.
We thus re-design~$W_v$, the \emph{weight} of a vertex $v$ to specifically
consider the capacitated scenario.
Yet,
even with the new definition of $W_v$,
there is still a second challenge on how to approximate the solution within a constant factor in the dynamic environment.
In order to achieve $O(\log n)$ amortized update time,
Bhattacharya et al.'s \emph{fractional matching} approach assigns the value of all $v$'s incident edges to $v$,
which, however, may result in a non-constant $h$, hidden in the approximation ratio, where $h$
is the largest number of copies selected in the cover.
We observe that we cannot remove $h$ from the approximation guarantee based on the $(\alpha,\beta)$-partition structure
if we just select the minimum value of $\alpha$, as it is done in~\cite{BHI1,BHI2}.
The key insight is that we show a bound on the value of $\alpha$, which restricts the updates of the dynamic mechanism.
With the help of this insight,
we are able to revise the setting of~$\alpha$ to derive a constant approximation ratio,
while maintaining the $O(\log n)$ update time.

\longdelete{ 
Admittedly, we extend Bhattacharya et al.'s~\cite{BHI1,BHI2} idea of computing the \emph{weight} of a vertex $v$ in the dual problem.
They also permit a range of the weight of a vertex $v$, $W_v \in (c_v/\varepsilon, c_v]$ for the purpose of updating even more quickly,
where $\varepsilon$ is a constant.
}

\section{Level Scheme and its Key Property}
The core of Bhattacharya et al.'s $(\alpha,\beta)$-partition structure~\cite{BHI1,BHI2} is
a \emph{level scheme}~\cite{OR}
that is used to maintain a feasible solution in their dual problem.
In this section, we demonstrate (in a different way from the original papers)
how this scheme can be applied to our dual problem, and describe the key property that the scheme guarantees.

A \emph{level scheme} is an assignment $\ell: V\rightarrow\{ 0, 1, \ldots, L\}$ such that
every vertex $ v\in V$ has a level~$\ell(v)$.  Let $c_\min$ and $c_\max$ denote the minimum and maximum costs of a vertex, respectively.
For our case, we set $L = \lceil \log_{\beta}( n\mu\alpha / c_{\min})\rceil$
for some $\alpha, \beta > 1$ and $\mu > c_{\max}$.   Based on $\ell$, each edge $(u,v)$ is also associated with a level $\ell(u,v)$,
where $\ell(u,v) = \max \{\ell(u),\ell(v)\}$.  An edge is assigned to the higher-level endpoint, and ties are broken arbitrarily
if both endpoints have the same level.

Each edge $(u,v)$ has a weight $w(u,v)$ according to its level, such that $w(u,v) = \mu\beta^{-\ell(u,v)}$.
Each vertex $v$ also has a weight $W_v$, which is defined based on the incident edges of $v$ and their corresponding levels.
Before giving details on $W_v$, we first define some notations.
Let $N_{v}=\{u \mid (u,v)\in E \} $ be the set of vertices adjacent to $v$  (i.e., the neighbors of $v$).
Let $N_{v}(i)$ denote the set of level-$i$ neighbors of $v$,
and $N_{v}(i, j)$ denote the set of $v$'s neighbors whose levels are in the range $[i, j]$.
That is, $N_{v}(i)=\{u \mid (u,v)\in E \land \ell(u) = i\}$ and $N_{v}(i,j)=\{u \mid (u,v)\in E \land \ell(u) \in [i,j]\}$.
The \emph{degree} of a vertex $v$ is denoted by
$D_{v}=|N_{v}|$. Similarly, we define $D_{v}(i)=|N_{v}(i)|$ and $D_{v}(i,j)=|N_{v}(i, j)|$.
Finally, we use $\delta(v)$ to denote the set of edges assigned to a vertex $v$.
Now, the weight $W_v$ of a vertex $v$ is defined as follows:

\begin{description}
  \item[Case 1] $D_v(0,\ell(v))>k_v$:
  \begin{align*}
  W_v =  k_v \mu\beta^{-\ell(v)}+\sum_{i>\ell(v)}{\min\{k_v,D_v(i)\}\mu\beta^{-i}}
  \end{align*}
  \item[Case 2] $D_v(0,\ell(v))\leq k_v$:
  \begin{align*}
  W_v = D_v(0,\ell(v))\mu\beta^{-\ell(v)}+\sum_{i>\ell(v)}{\min\{k_v,D_v(i)\}\mu\beta^{-i}}
  \end{align*}
\end{description}

Due to the capacity constraint,
we consider whether the number of level-$i$ neighbors of~$v$, $0 \le i \le \ell(v)$,
is larger than the capacity of~$v$, to define the weight of a vertex~$v$.
Note that the total weight of the edges that are assigned to~$v$ or incident to~$v$
can contribute at most $k_v w(u,v)$ to $W_v$.
Briefly, the weight of a vertex has two components:  one that is dependent on the incident edges with level
$\ell(v)$,
and the other that is dependent on the remaining incident edges.
For convenience, we call the former component $\mathit{Internal}_v$ and the latter component as $\mathit{External}_v$.
Moreover, we have:
\[
\mathit{External}_v\quad \leq\quad k_v\sum_{i>\ell(v)}{\mu\beta^{-i}}\ \ \leq\ \ (1/(\beta-1))k_v \mu\beta^{-\ell(v)}.
\]

\longdelete{ 
The edges which assigned to the vertex $v$ can at most contribute to $k_v$ edge's weight.
On top of that, for those incident edges which edge level
$\ell(j) > \ell(v)$ it also dedicates to a portion of the weight of a vertex $v$.
However, for every level in which it can only contribute to at most $k_v$ edge's weight.
}

In general, an arbitrary level scheme cannot be used to solve our problem.  What we need is
a \emph{valid} level scheme, which is defined as follows.
\begin{definition}
A level scheme is \emph{valid} if $W_v\leq c_v$, for every vertex $v$.
\end{definition}
\begin{lemma} \label{valid}
Let $V_0$ denote the set of level-$0$ vertices in a valid level scheme.
Then, $ V\setminus V_0$ forms a \emph{vertex cover} of $G$.
\end{lemma}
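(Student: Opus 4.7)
\medskip

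\noindent\textbf{Proof plan for Lemma~\ref{valid}.}
The plan is a direct contradiction argument: I will show that if there were an edge with both endpoints at level~$0$, then at least one of those endpoints would have weight strictly exceeding its cost, violating the validity hypothesis. The whole argument turns on the single inequality $\mu > c_{\max}$ built into the level scheme, together with the fact that $k_v \geq 1$.

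First I will suppose, towards contradiction, that there exists an edge $(u,v) \in E$ with $u,v \in V_0$, i.e., $\ell(u) = \ell(v) = 0$. Since $u$ is a neighbor of $v$ and sits at level $0 = \ell(v)$, we have $u \in N_v(0)$, so $D_v(0) \geq 1$, and in particular $D_v(0,\ell(v)) = D_v(0) \geq 1$. I will then split on the two cases in the definition of $W_v$:

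\begin{itemize}
\item \emph{Case 1: $D_v(0,\ell(v)) > k_v$.} Plugging $\ell(v) = 0$ into the definition yields
\[
W_v \;=\; k_v\mu\beta^{0} + \sum_{i>0}\min\{k_v, D_v(i)\}\mu\beta^{-i} \;\geq\; k_v\mu \;\geq\; \mu,
\]
where the last inequality uses $k_v \geq 1$.
\item \emph{Case 2: $D_v(0,\ell(v)) \leq k_v$.} Again plugging $\ell(v) = 0$ gives
\[
W_v \;=\; D_v(0)\mu\beta^{0} + \sum_{i>0}\min\{k_v, D_v(i)\}\mu\beta^{-i} \;\geq\; D_v(0)\mu \;\geq\; \mu,
\]
since $D_v(0) \geq 1$ as noted above.
\end{itemize}

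In either case $W_v \geq \mu > c_{\max} \geq c_v$, contradicting the assumption that the level scheme is valid. Therefore no edge has both endpoints in $V_0$, i.e., every edge is incident to some vertex in $V \setminus V_0$, which is exactly the statement that $V \setminus V_0$ is a vertex cover.

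I do not foresee a real obstacle here: the only subtle point is making sure the definition of $W_v$ is read correctly in the degenerate case $\ell(v) = 0$, where the ``internal'' term becomes $D_v(0)\mu$ (or $k_v\mu$ in Case~1) rather than vanishing. Once that is in place, the lemma follows immediately from $\mu > c_{\max}$, which is precisely the reason the parameter $\mu$ was chosen to dominate $c_{\max}$ in the setup of the level scheme.
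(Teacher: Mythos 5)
Your proof is correct and follows essentially the same contradiction argument as the paper's: assume an edge has both endpoints at level~$0$, observe that the internal component of $W_v$ is then at least $\mu$, and derive $c_v \geq W_v \geq \mu > c_{\max}$. You merely make explicit the case split (and the implicit assumption $k_v \geq 1$) that the paper glosses over with ``$w(u,v)$ appears in $\mathit{Internal}_v$,'' which is a modest clarification rather than a different route.
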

\begin{proof}
Consider any edge $(u,v)\in E$. We claim that at least one of its endpoints must be in $V\setminus V_0$.
Suppose that the claim is false which implies that $\ell(u) = \ell(v) = 0$ and $w(u,v) = \mu > c_{\max}$.
Since $w(u,v)$ appears in $\mathit{Internal}_v$, we have $W_v \geq w(u,v)$.
As a result, $c_v \geq W_v \geq \mu > c_{\max}$, which leads to a contraditction.
The claim thus follows, and so does the lemma.
\end{proof}

The above lemma implies that no edge is assigned to any level-$0$ vertex.
In our mechanism, we will maintain a valid level scheme, based on which each vertex in $V \setminus V_0$ picks enough copies to cover all the edges assigned to it;  this forms a valid capacitated vertex cover.

Next, we define the notion of \emph{tightness}, which is used to measure how good a valid level scheme performs.
\begin{definition}
A valid level scheme with an associated edge assignment is \emph{$\varepsilon$-tight}
if for every vertex~$v$ with $|\delta(v)| > 0$,  $W_v \in (c_v/\varepsilon, c_v]$.
\end{definition}

\begin{lemma} \label{vareps-tight}
Given an $\varepsilon$-tight valid level scheme,  we can obtain an $\varepsilon(2(\beta/(\beta-1))+1)$-approximation solution to the \emph{weighted minimum capacitated vertex cover (WMCVC)} problem.
\end{lemma}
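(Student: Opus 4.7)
My plan is to prove the lemma by combining an explicit primal construction with an LP-duality bound. Starting from the $\varepsilon$-tight valid level scheme, I construct a capacitated vertex cover by selecting $x_v = \lceil |\delta(v)|/k_v\rceil$ copies of each $v \in V \setminus V_0$ and assigning the edges of $\delta(v)$ to them (at most $k_v$ per copy); feasibility is immediate from Lemma~\ref{valid} and $k_v x_v \ge |\delta(v)|$. In parallel, I set $\pi_e = w(e) = \mu\beta^{-\ell(e)}$ in the LP dual and produce matching variables $q_v, l_{ev}$ so that the triple is dual-feasible, giving the lower bound $\sum_e w(e) \le \mathrm{OPT}$ by weak LP duality.

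The key per-vertex estimate comes from $\varepsilon$-tightness. The geometric-series bound $\mathit{External}_v \le \tfrac{1}{\beta-1}\,k_v\mu\beta^{-\ell(v)}$, combined with $\mathit{Internal}_v \le k_v\mu\beta^{-\ell(v)}$, yields $W_v \le \tfrac{\beta}{\beta-1}\,k_v\mu\beta^{-\ell(v)}$; together with $c_v < \varepsilon W_v$ this gives
\[
\frac{c_v}{k_v} \;<\; \varepsilon\,\frac{\beta}{\beta-1}\,\mu\beta^{-\ell(v)} \;=\; \varepsilon\,\frac{\beta}{\beta-1}\, w(e) \quad \text{for every } e \in \delta(v).
\]
Bounding $\lceil |\delta(v)|/k_v\rceil \le 2|\delta(v)|/k_v + 1$ splits $c_v x_v$ into a ``capacity'' piece and a ``rounding'' piece. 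Summing the first over $v$ and using that every edge appears in exactly one $\delta(v)$ contributes at most $2\varepsilon\beta/(\beta-1) \cdot \sum_e w(e)$; the second piece $\sum_{v : |\delta(v)|>0} c_v$ is similarly charged to edge weights via $c_v < \varepsilon W_v$, producing an additional $\varepsilon \cdot \sum_e w(e)$. Adding the two pieces and invoking $\sum_e w(e) \le \mathrm{OPT}$ yields the claimed ratio $\varepsilon(2\beta/(\beta-1)+1)$.

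The main obstacle is establishing dual feasibility of $(q_v, l_{ev})$ against $\pi_e = w(e)$, because the constraints $q_v + l_{ev} \ge w(e)$ and $k_v q_v + \sum_e l_{ev} \le c_v$ interact in subtle ways. For a Case~1 vertex the direct choice $q_v = \mu\beta^{-\ell(v)}, l_{ev} = 0$ works: $q_v \ge w(e)$ holds because $\ell(e) \ge \ell(v)$, and $k_v q_v = \mathit{Internal}_v \le W_v \le c_v$. In Case~2 this choice can violate the capacity constraint, so I plan a water-filling construction: let $j^*$ be the smallest level $j \ge \ell(v)$ for which $D_v(0,j) \ge k_v$ (set $j^* = \infty$ if no such $j$ exists), take $q_v = \mu\beta^{-j^*}$, and $l_{ev} = \max(0, w(e) - q_v)$ for each incident edge. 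A level-by-level inspection then verifies $k_v q_v + \sum_e l_{ev} \le W_v \le c_v$, so the dual is feasible and the bounds above compose into the stated approximation factor.
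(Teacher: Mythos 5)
Your primal construction, choice of $\pi_e = \mu\beta^{-\ell(e)}$, and first per-vertex estimate $W_v \le \tfrac{\beta}{\beta-1}k_v\mu\beta^{-\ell(v)}$ all match the paper's approach, and your first-piece bound (charging $2c_v|\delta(v)|/k_v$ against $2\varepsilon\beta/(\beta-1)\sum_e w(e)$, using that each edge lies in exactly one $\delta(v)$) is fine. However, the accounting for your ``rounding piece'' has a real gap. You claim $\sum_{v:|\delta(v)|>0} c_v < \varepsilon\sum_e w(e)$ via $c_v < \varepsilon W_v$, but $W_v$ is built from \emph{all} edges incident to $v$, not just those in $\delta(v)$; an edge $e=(u,v)$ contributes $w(e)$ to both $W_u$ and $W_v$. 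Hence $\sum_{v:|\delta(v)|>0} W_v$ can be as large as $2\sum_e w(e)$, and your argument only delivers $\varepsilon\left(2\tfrac{\beta}{\beta-1}+2\right)$, not the stated $\varepsilon\left(2\tfrac{\beta}{\beta-1}+1\right)$. The root cause is the uniform split $\lceil x\rceil \le 2x+1$, which charges \emph{every} covered vertex for both pieces; the paper instead uses $\lceil x\rceil \le \max\{1,2x\}$, i.e.\ a case split on whether $\lceil|\delta(v)|/k_v\rceil > 1$ or $=1$. In the multiplicity-one case the paper bounds $c_v \le \varepsilon W_v \le \varepsilon\bigl(\sum_{e\in\delta(v)}\pi_e + \sum_{e\notin\delta(v),\,e\sim v}\pi_e\bigr)$ and observes that the $\delta(v)$ part can be absorbed into the $2\varepsilon\beta/(\beta-1)$ coefficient while the non-$\delta(v)$ part gets coefficient $\varepsilon$; since each edge appears in exactly one $\delta(v)$ and at most once as a non-$\delta$ edge of its other endpoint, the total is exactly $\varepsilon\bigl(2\tfrac{\beta}{\beta-1}+1\bigr)\sum_e\pi_e$ with no double charge.

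A second, smaller issue: your water-filling dual assignment in Case~2 is stated but not verified, and as described it does not obviously satisfy $q_v + l_{ev}\ge \pi_e$ for edges in $\delta(v)$ when $j^*>\ell(v)$. The paper uses a different, directly checkable choice: $q_v = \mu\sum_{i:\,D_v(i)>k_v}\beta^{-i}$ and $l_{ev}=\mu\beta^{-\ell(e)}$ precisely when $D_v(\ell(e))\le k_v$, so that $k_v q_v + \sum_e l_{ev}$ telescopes level-by-level to something $\le W_v \le c_v$. You should either adopt that construction or carry out the claimed ``level-by-level inspection'' explicitly.
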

\begin{proof}
First, we fix an arbitrary edge assignment that is consistent with the given valid level scheme.
For each vertex $v$ with $|\delta(v)| > 0$, we pick $\lceil |\delta(v)|/k_v \rceil$ copies to cover all the $|\delta(v)|$
edges assigned to it.   To analyze the total cost of this capacitated vertex cover,
we relate it to the value $\sum_{e} \pi_e$ of a certain feasible solution of the dual problem, whose
corresponding values of $q_v$ and $l_{ev}$ are as follows:

\begin{description}
   \item For every vertex $v$:
   \item $\bullet$ if $\lceil|\delta(v)|/k_v\rceil > 1$:   $q_v = \mu \beta^{-\ell(v)}$,  and $l_{ev} = 0$;
   \item $\bullet$ if $\lceil|\delta(v)|/k_v\rceil \leq 1$:   $q_v = \mu \sum_{ i \mid D_v(i) > k_v }  \beta^{-i}$,
                                                                                  $l_{ev} = 0$ if $D_v(\ell(e)) > k_v$, and $l_{ev} = \mu \beta^{-\ell(e)}$ otherwise.
   \item For every edge $e$:  $\pi_e = \mu \beta^{-\ell(e)}$.
\end{description}

\noindent
It is easy to verify that the above choices of $q_v$, $l_{ev}$, and $\pi_e$ give a feasible solution to the dual problem.

For the total cost of our solution, we separate the analysis into two parts, based on the multiplicity of the vertex:

\begin{description}
  \item [Case 1] $\lceil|\delta(v)|/k_v\rceil> 1$:
                             In this case, the external component of $W_v$ is at most $1/(\beta-1)$
                             of the internal component, so $W_v \leq (\beta/(\beta-1))k_v q_v$.
                             Then, the cost of all copies of $v$ is:
  \longdelete
  {
  We let $q_v = \pi_e $ and $l_{ev} = 0$.
  Due to the formula of computing a vertex weight, we have $W_v = k_v\mu\beta^{-\ell(v)}$+ External part.
  We connect $W_v$ to the feasible constraint in the dual problem such that $k_v q_v +\sum_{e\in N_v}{lev}\leq c_v$ and we know that $W_v\leq(\beta/(\beta-1))k_v q_v$.

  thus, we have the following analysis:}
  \begin{align*}
\lceil|\delta(v)|/k_v\rceil\cdot c_v &\leq \lceil|\delta(v)|/k_v\rceil\cdot \varepsilon \cdot W_v\\
    &\leq 2 \cdot \frac{ |\delta(v)|}{k_v} \cdot \varepsilon \cdot (\beta/(\beta-1)) k_v q_v\
        =\ 2\varepsilon(\beta/(\beta-1))\cdot\sum_{e\in\delta(v)}{\pi_e}.
  \end{align*}
  \item[Case 2] $\lceil|\delta(v)|/k_v\rceil\ =  1$:   In this case, we pick one copy of vertex $v$, whose cost is:
  \begin{align*}
  c_v&\leq \varepsilon\cdot W_v\ \leq \varepsilon \cdot \sum_{e \sim v} \pi_e\ =\ \varepsilon \cdot \left(\sum_{e\in\delta(v)}{\pi_e}+\sum_{e\notin\delta(v),\, e \sim v}{\pi_e} \right),
\end{align*}
where $e \sim v$  denotes $e$ is an edge incident to $v$.
\end{description}

In summary, the total cost is bounded by
\begin{align*}
&\sum_{v}{\left(\max\{\varepsilon,2\varepsilon(\beta/(\beta-1))\}\sum_{e\in\delta(v)}{\pi_e}+\varepsilon \sum_{e\notin\delta(v),\, e\sim v}{\pi_e}\right)}\\
&=\sum_{v}{\left( 2\varepsilon(\beta/(\beta-1))\sum_{e\in\delta(v)}{\pi_e}+\varepsilon \sum_{e\notin\delta(v),\, e\sim v}{\pi_e} \right)}\\
&=\varepsilon(2(\beta/(\beta-1))+1)\sum_{e}{\pi_e}\\
&\leq \varepsilon(2(\beta/(\beta-1))+1)\cdot OPT,
\end{align*}
where $OPT$ denotes the optimal solution of the dual problem, which is also a lower bound of the cost of any weighted capacitated vertex cover.
\end{proof}

The next section discusses how to dynamically maintain an $\varepsilon$-tight level scheme, for some constant factor $\varepsilon$ and with amortized $O(\log n/\epsilon)$ update time.
Before that, we show a greedy approach to get a $(\beta+1)$-tight level scheme to the static problem as a warm up.

First, we have the following definition.
\begin{definition}\label{improve}
A valid level scheme $\lambda$ is \emph{improvable} if some vertex can drop its level to get another level scheme $\lambda'$ such that $\lambda'$ is valid;
otherwise, we say $\lambda$ is \emph{non-improvable}.
\end{definition}

\begin{lemma}\label{non-improvable}
If a valid level scheme $\lambda$ is non-improvable, then $\lambda$ is $(\beta+1)$-tight.
\end{lemma}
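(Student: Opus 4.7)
The plan is to prove the contrapositive: assuming $\lambda$ is a valid level scheme that is not $(\beta+1)$-tight, I would exhibit some single-vertex drop that preserves validity, thereby contradicting non-improvability. Concretely, not being $(\beta+1)$-tight means some vertex $v$ with $|\delta(v)|>0$ has $W_v\leq c_v/(\beta+1)$, and my aim is to show that the scheme $\lambda'$ obtained by dropping $\ell(v)$ from $\ell^\star:=\ell(v)$ to $\ell^\star-1$ remains valid.

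The main calculation is the bound $W_v^{\text{new}}\leq(\beta+1)W_v^{\text{old}}$, which together with the hypothesis immediately gives $W_v^{\text{new}}\leq c_v$. I would decompose $W_v^{\text{old}}=\mathit{Internal}_v^{\text{old}}+\mathit{External}_v^{\text{old}}$ as defined in the excerpt and observe two effects of the drop: (i) the new internal component uses the factor $\beta^{-(\ell^\star-1)}=\beta\beta^{-\ell^\star}$, inflating the old internal by at most $\beta$; and (ii) the neighbors previously at level $\ell^\star$, which were counted inside $\mathit{Internal}_v^{\text{old}}$, now migrate into the topmost term of the new external sum and contribute at most $k_v\mu\beta^{-\ell^\star}$, which is itself dominated by $\mathit{Internal}_v^{\text{old}}$ in Case~1 and telescopes cleanly in Case~2. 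A short case split over whether $D_v(0,\ell^\star)$ and $D_v(0,\ell^\star-1)$ each exceed $k_v$ then yields $W_v^{\text{new}}\leq(\beta+1)\mathit{Internal}_v^{\text{old}}+\mathit{External}_v^{\text{old}}\leq(\beta+1)W_v^{\text{old}}\leq c_v$.

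Next I would verify that no neighbor $u$ of $v$ becomes infeasible. For any $u$ with $\ell(u)\geq\ell^\star$, the level of $v$ moves from $\ell^\star$ to $\ell^\star-1$ and both values lie in the range $[0,\ell(u)]$, so $D_u(0,\ell(u))$ and every $D_u(i)$ for $i>\ell(u)$ are unchanged, hence $W_u$ is unaffected. For $u$ with $\ell(u)<\ell^\star$, the vertex $v$ either slides from $N_u(\ell^\star)$ to $N_u(\ell^\star-1)$ (still external, when $\ell(u)<\ell^\star-1$) or transitions into $u$'s internal bucket (when $\ell(u)=\ell^\star-1$). Bounding the pointwise change level by level, $\Delta W_u\leq \beta\mu\beta^{-\ell^\star}$, and I would then use the smallness of $W_v$ together with $\mu>c_{\max}$ to show $W_u^{\text{new}}\leq c_u$, noting that $W_v\leq c_v/(\beta+1)$ forces $D_v(0,\ell^\star)$ (the count of lower-level neighbors of $v$) to be small, which in turn restricts how many $u$'s are affected and by how much.

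The main obstacle will be precisely this last step: turning the bound on $W_v$ into a guarantee on each affected $W_u$. The first step is essentially an algebraic manipulation of the piecewise definition of $W_v$ and should be routine; the trivial neighbor case is immediate from the formula; but the low-level neighbor case requires combining the hypothesis on $W_v$ with the structural constraint $\mu>c_{\max}$ (so that the edge $(u,v)$'s fractional weight is already forced to be a small fraction of $c_u$) in a fairly delicate accounting. If this direct accounting falls short for some specific $u$, the backup plan is to instead select $u$ as the vertex to drop and carry out the same analysis, using an infinite-descent argument on the sum of levels to contradict non-improvability.
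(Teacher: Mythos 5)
Your opening calculation---comparing $W_v$ at levels $\ell^\star$ and $\ell^\star-1$ via a case split on whether $D_v(0,\cdot)$ exceeds $k_v$, and deducing $W_v^{\mathrm{new}}\le(\beta+1)W_v^{\mathrm{old}}$---is exactly the paper's argument and is fine.

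Where you depart from the paper is in trying to verify that no neighbor $u$ of $v$ becomes infeasible after the drop. The paper's proof does not do this at all: it reads off from $W_v^{\mathrm{new}}\le(\beta+1)\cdot c_v/(\beta+1)=c_v$ that $v$ can drop, and then asserts that non-improvability forces $W_v>c_v/(\beta+1)$, never tracking $W_u$ for any $u\neq v$. Your instinct that lowering $\ell(v)$ by one can change $W_u$ for a lower-level neighbor $u$ is correct and is the genuinely delicate point: if $\ell(u)<\ell(v)$, then $v$ migrates from the level-$\ell(v)$ bucket of $u$ to the level-$(\ell(v)-1)$ bucket, and if $D_u(\ell(v))>k_u$ while the lower bucket is not saturated, $W_u$ can \emph{increase} by as much as $\beta\mu\beta^{-\ell(v)}$. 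However, the tools you propose cannot close this. The hypothesis $W_v\le c_v/(\beta+1)$ controls $D_v(0,\ell^\star)$, i.e.\ \emph{how many} low-level neighbors $v$ has, but says nothing about the slack $c_u-W_u$ of any particular neighbor; a single neighbor $u$ already sitting near its own budget $c_u$ defeats the accounting no matter how small $W_v$ or $D_v(0,\ell^\star)$ is, and $\mu>c_{\max}$ does not help. The fallback ``drop $u$ instead and descend on the total level'' also fails outright: the blocking neighbor $u$ may sit at level $0$ (and so cannot drop), or may itself be blocked by a third vertex, so there is no descent to run. In short, the step you yourself flag as ``the main obstacle'' is a genuine gap in your proposal: either you must actually discharge the bound on every affected $W_u$---which the hypotheses do not permit in general---or you must follow the paper and treat ``$v$ can drop its level'' as a condition on $W_v$ alone, which is what the paper's one-step proof implicitly does.
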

\begin{proof}
To prove this lemma, we compare the weight $W_v$ of a vertex $v$ when its level is set to $i$ and $i+1$ which denoted by $W_{v}(i)$ and $W_{v}(i+1)$, respectively
(while the level of every other vertex remains unchanged).
\begin{description}
\item[Case 1] $D_{v}(0,i+1) \leq k_{v}$:
\begin{align*}
W_{v}(i+1) &=  D_{v}(0,i+1)\mu\beta^{-(i+1)}+\sum_{j>i+1}{\min\{k_v,D_v(j)\}\mu\beta^{-j}};\\
W_{v}(i) &= D_{v}(0,i)\mu\beta^{-(i)}+D_{v}(i+1)\mu\beta^{-(i+1)}+\sum_{j>i+1}{\min\{k_v,D_v(j)\}\mu\beta^{-j}}\\
&\leq \beta W_{v}(i+1);
\end{align*}
\item[Case 2] $D_{v}(0,i+1) > k_{v}$:
\begin{align*}
W_{v}(i+1) &= k_{v}\mu\beta^{-(i+1)} + \sum_{j>i+1}{\min\{k_v,D_v(j)\}\mu\beta^{-j}}; \\
W_{v}(i) &= \min\{k_v, D_{v}(0,i)\}\mu\beta^{-(i)}+D_{v}(i+1)\mu\beta^{-(i+1)}+\sum_{j>i+1}{\min\{k_v,D_v(j)\}\mu\beta^{-j}}\\
&\leq (\beta+1) W_{v}(i+1);
\end{align*}
\end{description}
In both cases, the weight $W_v(i)$ is at most $(\beta+1)$ times of $W_{v}(i+1)$.   Thus, if a vertex cannot drop its level,
either its current level is $0$, or by doing so we have $W_v(\ell(v)-1) > c_v$;   the latter implies that the current value of
$W_v = W_v(\ell(v))$ is larger than $c_v/(\beta+1)$.
Thus, if no vertex can drop its level, then the level scheme is $(\beta+1)$-tight.
\end{proof}

If we set the level of every vertex to $L$ initially, it is easy to check that by our choice of $L$ as $\lceil \log_{\beta}( n\mu\alpha / c_{\min})\rceil$,
such a level scheme is valid.
Next, we examine each vertex one by one, and drop its level as much as possible while the scheme remains valid.
In the end, we will obtain a non-improvable scheme, so that by the above lemma, the scheme is $(\beta+1)$-tight.
This implies a $(\beta+1)(2(\beta/(\beta-1))+1)$-approximate solution for the WMCVC problem.

\section{Maintaining an $\alpha(\beta+1)$-tight Level Scheme Dynamically}
In this section, we present our $O(1)$-approximation algorithm for the WMCVC problem, with amortized $O(\log n)$ update time
for each edge insertion and edge deletion.  We first state an invariant that is maintained throughout by our algorithm, and show how the latter is done.
Next, we analyze the time required to maintain the invariant with the potential method, and show that our proposed method can be updated efficiently as desired.
To obtain an $O(\log n)$ amortized update time, we relax the flexible range of the weight of a vertex $W_v$ by multiply a constant $\alpha$.
Let $c_v^\ast$ be $c_v/\alpha(\beta+1)$.  The invariant that we maintain is as follows.

\begin{invariant}\label{inv}
(1) For every vertex $v \in V \setminus V_0$, it holds that $c_v^\ast \leq W_v \leq c_v$, and (2)
for every vertex $v \in V_0$, it holds that $W_v \leq c_v$ .
\end{invariant}

By maintaining the above invariant,
we will automatically obtain an $\alpha(\beta+1)$-tight valid scheme.
As mentioned, we will choose a value for $\alpha$ in order to remove $h$ from the approximation ratio.
In particular, we will set $\alpha = (2\beta+1)/\beta+2\epsilon$, where $0 < \epsilon < 1$ to balance the update time, and $\beta = 2.43$ to minimize the approximation ratio,
so that we achieve the following theorem.

\begin{theorem}\label{Big}
There exists a dynamic level scheme $\lambda$ which can achieve a constant approximation ratio $(\approx36)$ for the WMCVC problem with $O(\log n/\epsilon)$ amortized update time.
\end{theorem}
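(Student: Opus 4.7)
The plan is to build the dynamic algorithm on top of the warm-up procedure of Section~2, turning the greedy "drop-level" routine into one that reacts to edge insertions and deletions while maintaining Invariant~\ref{inv}. Since Invariant~\ref{inv} immediately gives an $\alpha(\beta+1)$-tight valid level scheme, Lemma~\ref{vareps-tight} yields approximation ratio $\alpha(\beta+1)(2(\beta/(\beta-1))+1)$; plugging in $\beta=2.43$ and $\alpha=(2\beta+1)/\beta+2\epsilon$ gives a constant of roughly $36$ (up to lower-order $\epsilon$ terms).

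For the algorithm itself, I would handle each edge update as follows. When an edge $(u,v)$ is inserted or deleted, the weights $W_u$ and $W_v$ change (and possibly the edge's assignment), which may violate Invariant~\ref{inv} at $u$ or $v$. If some vertex $w$ now has $W_w>c_w$, raise $\ell(w)$ by one (which by Lemma~\ref{non-improvable} shrinks $W_w$ by at most a factor of $\beta+1$, so a single raise suffices to restore $W_w\le c_w$ unless $W_w$ was far above $c_w$, in which case we keep raising). Conversely, if $w\in V\setminus V_0$ has $W_w<c_w^\ast$, drop $\ell(w)$ by one. Each level change of $w$ rewrites the levels (and possibly the assignments) of the edges incident to $w$, so the weights of $w$'s neighbors must be recomputed and the procedure repeats in the neighborhood. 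Once no vertex violates Invariant~\ref{inv}, we stop; termination is guaranteed because levels live in the finite range $[0,L]$ and our choice of $\alpha$ (see below) rules out oscillation.

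For the amortized analysis I would use a potential function $\Phi=\sum_v \Phi_v$ with $\Phi_v$ designed so that (i) each elementary level change of $v$ releases $\Omega(\epsilon)\cdot D_v$ units of potential, which pays for the $O(D_v)$ work of updating $v$'s incident-edge weights and neighbor weights; and (ii) a single edge insertion or deletion raises $\Phi$ by $O(\log n/\epsilon)$. Concretely, $\Phi_v$ counts, across all levels, how far $W_v$ sits above $c_v^\ast$ (triggering a drop penalty) or below $c_v$ (triggering a raise penalty), scaled so that a single level change moves $W_v$ across a $\Theta(\epsilon)$-fraction of the allowed band. Because there are at most $L=O(\log n)$ levels and each edge update directly affects only $O(1)$ edge-weight terms, the "injection" of potential per update is $O(\log n / \epsilon)$; summing across an arbitrary update sequence gives the claimed amortized bound, and combining with Invariant~\ref{inv} and Lemma~\ref{vareps-tight} yields the theorem.

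The main obstacle is the no-oscillation argument that justifies the specific value $\alpha=(2\beta+1)/\beta+2\epsilon$. If we raise $\ell(v)$ because $W_v>c_v$, we need the post-raise weight to land safely inside $[c_v^\ast,c_v]$, not immediately below $c_v^\ast$ (which would force a drop). By the two-case bound in Lemma~\ref{non-improvable}, raising costs at most a factor of $\beta+1$ in weight, while the capacity-aware internal term $k_v \mu \beta^{-\ell(v)}$ behaves slightly differently from the uncapacitated case because $W_v$'s internal component is bounded by $k_v w(u,v)$ rather than by a fractional-matching sum. Working through the ratio of the old and new weights in both cases of the definition of $W_v$, the worst-case shrinkage is a factor of $\beta+1$, and to guarantee the post-raise value exceeds $c_v^\ast=c_v/(\alpha(\beta+1))$ by a $\Theta(\epsilon)$ margin forces the lower bound $\alpha\ge(2\beta+1)/\beta+2\epsilon$; a symmetric calculation handles drops. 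The choice $\beta=2.43$ then minimizes the composed approximation ratio $\alpha(\beta+1)(2(\beta/(\beta-1))+1)$, giving the stated constant. Once this oscillation gap is fixed, the potential-method accounting proceeds along the same lines as in \cite{BHI1,BHI2}, with the only new ingredient being that each weight recomputation respects the case split in the definition of $W_v$.
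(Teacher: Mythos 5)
Your high-level architecture (maintain Invariant~\ref{inv} via the \textsc{Fix} loop, then read off the ratio from Lemma~\ref{vareps-tight}) matches the paper, but two substantial parts of your proposal diverge from what is actually needed, and both are gaps rather than alternative routes.

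First, your potential function is purely vertex-based, $\Phi=\sum_v\Phi_v$, measuring how far $W_v$ sits from the band $[c_v^\ast,c_v]$. This cannot make the level-up accounting work. When $v$ levels up, the touched edges have their levels raised from $i$ to $i+1$; from the viewpoint of a neighbor $u$ at a strictly lower level, the contribution of edge $(u,v)$ to $W_u$ drops, so $W_u$ decreases and a vertex-band potential on $u$ can \emph{increase}. Meanwhile $v$'s own band potential is zero both before and after (it went from above $c_v$ to above $c_v/(\beta+1)=\alpha c_v^\ast>c_v^\ast$). So during level-up the total vertex potential can only go up while you still have $\Theta(D_v(0,i))$ work to pay for; there is nothing in $\Phi$ to release. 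The paper needs a separate \emph{edge} potential $\phi(e)=\left(\frac{\beta}{\beta-1}+\epsilon\right)(L-\ell(e))$ precisely so that each raised edge releases $\frac{\beta}{\beta-1}+\epsilon$ units, of which $\frac{\beta}{\beta-1}$ absorbs the neighbor-potential increases and $\epsilon$ pays for the $O(1)$ work per touched edge. Without an edge-level term the level-up analysis does not close.

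Second, your justification for $\alpha=(2\beta+1)/\beta+2\epsilon$ is wrong in substance. You argue that $\alpha$ must be this large so that a raise "lands safely above $c_v^\ast$"; but that anti-oscillation property holds for \emph{every} $\alpha>1$, since $W_v$ shrinks by at most a $(\beta+1)$ factor per raise and $c_v/(\beta+1)=\alpha c_v^\ast>c_v^\ast$. The real obstruction in the level-down analysis is that the work is bounded by $|\Delta\textsc{Count}|\le D_v^{t_0}(0,i)<h'\beta^i c_v^\ast/\mu$ where $h'=\max_v\lceil D_v(0,\ell(v))/k_v\rceil$ can be as large as $n$, and the released $\psi(v)$ potential only covers this if $\alpha$ scales with $h$. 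Choosing $\alpha=\Theta(h)$ would blow the approximation ratio up to $\Theta(n)$. The paper's key lemma (Lemma~\ref{non-dec}) resolves this: if $\alpha\ge\beta/(\beta-1)$ then by the time a level-down is triggered at $v$, exactly one copy of $v$ is selected (so effectively $h=1$ in the bound). This is proved by a case analysis on which kind of edge deletion could have decreased $W_v$ below $c_v^\ast$, exploiting that $W_v\ge k_v\mu\beta^{-\ell(v)}$ whenever more than one copy is selected and comparing against the value $W_v$ had when $v$ last leveled up. Only after this structural observation can one safely set $\alpha=(2\beta+1)/\beta+2\epsilon$ (which satisfies both $\alpha\ge\beta/(\beta-1)$ and the level-down budget inequality with $h=1$) and then optimize $\beta\approx 2.43$ to minimize $\alpha(\beta+1)\left(2\frac{\beta}{\beta-1}+1\right)\approx 36$. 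Your proposal never surfaces the $h$-dependence, so it misses the central new ingredient of the paper.
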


The remainder of this section is devoted to proving Theorem~\ref{Big}.

\subsection{The algorithm: Handling insertion or deletion of an edge}
We now show how to maintain the invariant under edge insertions and deletions.
A vertex is called \emph{dirty} if it violates Invariant~\ref{inv}, and \emph{clean} otherwise.
Initially, the graph is empty, so that every vertex is clean and is at level zero.
Assume that at the time instant just prior to the $t^{th}$ update, all vertices are clean.
When the $t^{th}$ update takes place, which either inserts or deletes an edge $e = (u, v)$,
we need to adjust the weights of $u$ and $v$ accordingly.
Due to this adjustment, the vertices $u$, or $v$, or both may become dirty.
To recover from this, we call the procedure \textsc{Fix}.
The pseudo codes of the update algorithm (Algorithm~1) and the procedure \textsc{Fix}
are shown in the next page.

\alglanguage{pseudocode}
\algrenewcommand\algorithmicrequire{\textbf{Input}}
\algrenewcommand\algorithmicensure{\textbf{Output}}

\begin{algorithm}
\begin{algorithmic}[1]
\caption{}
\If {an edge $e = (u,v)$ has been inserted }
\State Set $\ell(e) = \max\,\{\ell(u),\ell(v)\}$ and set $w(u,v)= \mu\beta^{-\ell(e)}$
\State Update $W_u$  and $W_v$

\ElsIf {an edge $e = (u,v)$ has been deleted }
\State Update $W_u$  and $W_v$
\EndIf

\State Run procedure \textsc{Fix}
\end{algorithmic}
\end{algorithm}

\begin{algorithm*}
\normalsize{\bf procedure} {\textsc{Fix}}:
\begin{algorithmic}[1]

\While {there exists a dirty vertex $v$}

\If {$W_{v}> c_v$}
\State Increment the level of $v$ by setting $\ell(v) \leftarrow \ell(v)+1$
\State Update $W_v$ and $W_u$ for all affected $v$'s neighboring vertices $u$
\ElsIf{$W_{v} < c_v^\ast$ and $\ell(v) > 0$}
\State Decrement the level of $v$ by setting $\ell(v) \leftarrow \ell(v) - 1$
\State Update $W_v$ and $W_u$ for all affected $v$'s neighboring vertices $u$
\EndIf

\EndWhile
\end{algorithmic}
\end{algorithm*}

Algorithm~1 ensures that Invariant~\ref{inv} is maintained after each update, so that the dynamic scheme is $\alpha(\beta+1)$-tight as desired.
To complete the discussion, as well as the proof of Theorem~\ref{Big}, it remains to show that each update can be performed efficiently, in amortized $O(\log n)$ time.

\longdelete{
\subsection{Data structures}
In this section we introduce the data structures to implement.
Here, we use the same structure reported in~\cite{BHI2}. \\
- A counter LEVEL$[v]$ to keep track of the current level of $v$. Thus,
we set LEVEL$[v] \leftarrow \ell(v)$.\\
- A counter WEIGHT$[v]$ to keep track of the weight of $v$. Thus, we set WEIGHT$[v] \leftarrow W_{v}$.\\
- For every level $i > $ LEVEL$[v]$, we store the set of vertices $N_{v}(i)$ as a doubly linked list
NEIGHBORS$_{v}[i]$. \\
- For every level $i\leq$ LEVEL$[v]$, the list NEIGHBORS$_{v}[i]$ is empty.\\
- For level $i$ = LEVEL$[v]$, we store the set of edges $N_{v}(0,i)$ in the form of
a doubly linked list NEIGHBORS$_{v}[0, i]$.\\
- For every level $i =$ LEVEL$[v]$, the list NEIGHBORS$_{v}(0,i)$ is empty.
In addition,
we have a doubly linked list DIRTY-NODES to store the set of dirty vertices.
For every vertex $v \in V$,
we maintain a flag bit STATUS$[v] \in \{$\emph{dirty, clean}$\}$
that indicates whether the vertex $v$ is dirty or clean.
}

\subsection{Time complexity}
Each update involves two steps, namely the adjustment of weights of the endpoints, and the running of procedure \textsc{Fix}.
We now give the time complexity analysis, where the main idea is to prove the following two facts: {\bf (Fact~1)} the amortized cost of the adjustment step is $O(\log n)$, and
{\bf (Fact~2)} the amortized cost of the procedure \textsc{Fix} is zero, irrespective of the number of vertices or edges that are affected during this step.
Once the above two facts are proven, the time complexity analysis follows.

\longdelete
{
\begin{lemma}\label{Time bound}
Fix any $\epsilon$, $0 < \epsilon <0.1$,
$\alpha = (2\beta-1)/(\beta-1)$
and
$\beta = 2.84 $.
Initializing
an empty graph $G=(V, \emptyset)$, we can maintain an $\alpha(\beta+1)$-tight level scheme in $G$
which satisfies Invariant~\ref{inv} in $O(\log n / \epsilon)$ amortized update time.
\end{lemma}
}

We use the standard potential method in our amortized analysis.
Imagine that we have a bank account $B$. Initially, the graph is empty, and the bank account $B$ has no money.
For each adjustment step during an edge insertion or deletion, we deposit some money into the bank account $B$;
after that, we use the money in $B$ to pay for the cost of the procedure \textsc{Fix}.

Following the definition of~\cite{BHI2}, we say a vertex $v \in V$ is \emph{active} if its degree in $G$ is non-zero, and \emph{passive} otherwise.
Now, the value of $B$ is set by the following formula:
\begin{align*}
B\ =\ \frac{1}{\epsilon}\, \cdot\, \left(\, \sum_{e\in E}{\phi(e)}+\sum_{v\in V}{\psi(v)}\, \right),
\end{align*}
where $0 < \epsilon < 1$, and $\phi$ and $\psi$ are functions defined as follows:

\begin{align*}
\phi(e) = \left(\frac{\beta}{(\beta-1)}+\epsilon\right)(L-\ell(e)).
\end{align*}

\begin{align*}
\psi(v)= \begin{dcases}
             \ \frac{\beta^{(\ell(v)+1)}}{\mu(\beta-1)}\cdot \max\, \{0,\alpha\, c_v^\ast-W_{v} \}, & \text{if $v$ is \emph{active}}. \\
             \ 0, & \text{otherwise.}
           \end{dcases}
\end{align*}

The following lemma proves Fact~1.

\begin{lemma}\label{Total potential}
After the adjustment step, the potential $B$ increases by at most $O(\log n/\epsilon)$.
\end{lemma}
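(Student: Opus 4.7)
The plan is to account separately for the two summands of $B$, using the fact that the adjustment step only (i) inserts or deletes the single edge $e=(u,v)$ and (ii) updates the weights $W_u$ and $W_v$, but does \emph{not} alter the level of any vertex. Since every $\ell(\cdot)$ is frozen during this step, the change in $\sum_{e'} \phi(e')$ comes only from the one edge that is added or removed, and the change in $\sum_{w} \psi(w)$ can affect only the two terms $\psi(u)$ and $\psi(v)$.

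For an edge insertion, I would first bound the new edge's contribution $\phi(e) = (\beta/(\beta-1)+\epsilon)(L-\ell(e)) \le (\beta/(\beta-1)+\epsilon)\,L = O(\log n)$, using $L = \lceil \log_{\beta}(n\mu\alpha/c_{\min})\rceil = O(\log n)$. For the $\psi$-terms I would show that $W_u$ and $W_v$ can only grow after an insertion, via a short case analysis of the weight formula: if $\ell(u)\le\ell(v)$, the count $D_v(0,\ell(v))$ rises by one, and one checks in each of the three possibilities (Case~2 stays Case~2, Case~2 becomes Case~1, Case~1 stays Case~1) that $W_v$ is either unchanged or increases by exactly $\mu\beta^{-\ell(v)}$; the symmetric situation $\ell(u)>\ell(v)$ puts the edge in the external sum, where the term $\min\{k_v, D_v(\ell(u))\}\mu\beta^{-\ell(u)}$ either stays fixed (if already saturated at $k_v$) or grows by $\mu\beta^{-\ell(u)}$. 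Hence $\max\{0,\alpha c_v^\ast - W_v\}$ is non-increasing for any endpoint that was already active, so its $\psi$-value does not grow. The only way $\psi$ can actually rise is if $u$ or $v$ was passive just before the insertion; by Invariant~\ref{inv}, any clean passive vertex (which has $W_v=0$) must lie in $V_0$ (else $c_v^\ast \le W_v = 0$, impossible), so its level is $0$, and its new value is at most $(\beta/(\mu(\beta-1)))\cdot\alpha c_v^\ast = \beta/((\beta-1)(\beta+1))\cdot (c_v/\mu) = O(1)$, using $\mu>c_{\max}\ge c_v$.

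For an edge deletion the $\phi(e)$ term is removed entirely, so $\sum_{e'} \phi(e')$ strictly decreases. The quantitative point is that the drop in $W_v$ is at most $\mu\beta^{-\ell(e)}\le \mu\beta^{-\ell(v)}$ (since $\ell(e)\ge\ell(v)$); this is again a short case analysis of Case~1 vs.\ Case~2, using that removing one unit from $D_v(0,\ell(v))$ cannot lose more than $\mu\beta^{-\ell(v)}$ from the internal term (Case~1 to Case~2 at the boundary leaves $W_v$ unchanged), and that each external level loses at most $\mu\beta^{-\ell(u)}$. Plugging into the definition of $\psi$, the increase is at most $(\beta^{\ell(v)+1}/(\mu(\beta-1)))\cdot \mu\beta^{-\ell(v)} = \beta/(\beta-1) = O(1)$ per endpoint, and a transition from active to passive can only push $\psi$ down to zero.

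Combining the insertion and deletion analyses, the inner quantity $\sum_{e'}\phi(e') + \sum_{w}\psi(w)$ rises by $O(\log n) + O(1) = O(\log n)$ per update, and dividing by the outer $\epsilon$ yields the claimed $O(\log n/\epsilon)$ bound on the change in $B$. The only place requiring genuine care is the case analysis on the capacity-capped $\min\{k_v,D_v(\cdot)\}$ terms in the weight formula: a unit change in $D_v(\cdot)$ does not always translate to a unit change in $W_v$, so one must verify separately that each sub-case yields a change of the right sign (non-negative for insertion) and magnitude (at most $\mu\beta^{-\ell(e)}$ for deletion). Everything else is routine bookkeeping and a constant-factor computation.
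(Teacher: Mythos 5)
Your proposal is correct and follows essentially the same decomposition and bounding strategy as the paper: separate the $\phi$ and $\psi$ contributions, bound the single new $\phi(e)$ term by $O(\log n)$ on insertion, argue that $\psi$ of an already-active endpoint does not increase on insertion while a newly-activated (formerly passive, hence level-$0$) endpoint contributes only $O(1)$, and on deletion use that $W_v$ drops by at most $\mu\beta^{-\ell(v)}$ so $\psi(v)$ rises by at most $\beta/(\beta-1)$. You spell out the sub-cases of the $\min\{k_v,\cdot\}$-capped weight formula (and the reason a passive clean vertex must sit at level $0$) in somewhat more detail than the paper does, but the argument and bounds coincide.
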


\begin{proof}
We separate the discussion into two cases: edge insertion and edge deletion.
Let $t$ be the moment where the update occurs.

\begin{itemize}
\item
{\bf Edge insertion.} The inserted edge $e$
generates a change of at most $\left(\frac{\beta}{(\beta-1)}+\epsilon\right)L$ in $\phi(e)$.  So, the summation $\sum \phi(e)$ increases by
at most $O(\log n)$.  For each endpoint $v$ of $e$, there are two possible cases for the change in $\psi(v)$:
\begin{description}
\item[Case 1:] The vertex $v$ was \emph{passive} at moment $t-1$.
By the definition of $\psi(v)$, we had $\psi(v) = 0$ and $\ell(v) = 0$ before the insertion of the edge $e$.
Hence, after the insertion of $e$,
we have
$$\psi(v)\ =\ \frac{\beta}{\mu(\beta-1)}\cdot \max\, \{0,\alpha\, c_v^\ast-W_{v}\}\ \leq\ \frac{\beta}{\mu(\beta-1)}\cdot{\alpha\, c_v^\ast}\ \leq\ \frac{\beta}{\mu(\beta-1)}\cdot{c_v}\ <\ \frac{\beta}{\beta-1}.$$
Therefore, the summation $\sum \psi(v)$ increases by at most $O(1)$.
\item[Case 2:] The vertex $v$ was \emph{active} at moment $t-1$.
In this case, the vertex $v$ remains \emph{active} at moment $t$.
Thus, the weight $W_v$ increases, and $\psi(v)$ can only decrease.
\end{description}
In both cases, the total potential $B$ increases by at most $O(\log n/\epsilon)$ after an edge insertion.

\item
{\bf Edge deletion.}
If an edge $e$ is deleted from $E$,
then $\phi(e)$ drops to zero, so that the summation $\sum \phi(e)$ decreases.
In contrast, the weight $W_{v}$ of each endpoint $v$ of $e$ decreases by at most~$\mu \beta^{-\ell(v)}$.
So,
$\psi(v)$ increases by at most
$$ \frac{\beta^{(\ell(v)+1)}}{ \mu(\beta-1)}\ \cdot\ \mu \beta^{-\ell(v)}
= \frac{\beta}{\beta-1},$$
which is a constant.  Thus, the summation $\sum \psi(v)$ increases by at most $O(1)$.  In summary,
the total potential $B$ increases by at most $O(1/\epsilon)$ after an edge deletion.
\end{itemize}

By the above arguments, the lemma follows.
\end{proof}


We now switch our attention to Fact~2.  Observe that the procedure \textsc{Fix} performs a series of level up and level down events.
For each such event, the level of a specific vertex~$v$ will be changed, which will then incur a change in its weight,
and changes in the weights of some of the incident edges and their endpoints.  Let $t_0$ denote the moment before a level up or a level down event, and $t_1$ denote the moment after the weights of the edges and vertices are updated due to this event.
Let $\textsc{Count}$ denote the number of times an edge in the graph $G$ is updated (for simplicity, we assume that in one edge update, the weight and the assignment of the edge may be updated, and so do the weights of its endpoints, where all these can be done in $O(1)$ time).

For ease of notation, in the following, a superscript $t$ in a variable denotes the variable at moment $t$.  For instance, $W_v^{t_0}$ stands for the weight $W_v$
of $v$ at moment $t_0$.  Also, we use $\Delta x$ to denote
the quantity $x^{t_0} - x^{t_1}$, so that
$$|\Delta \textsc{Count}| = |\textsc{Count}^{t_0} - \textsc{Count}^{t_1}| = \textsc{Count}^{t_1} - \textsc{Count}^{t_0}$$ represents the number of incident edges whose weights are changed between $t_0$ and $t_1$.

\medskip

Briefly speaking, based on the level scheme and the potential function $B$, we can show:
\begin{itemize}
  \item For each level up event, each of the affected edges $e$ would have its $\phi(e)$ value dropped,
          so that an $\epsilon$ fraction can pay for the weight updates of itself and its endpoints, while the remaining fraction
          can be converted into the increase in $\psi(v)$ value.

  \item For each level down event, the reverse happens, where the vertex $v$ would have its $\psi(v)$ value dropped,
          so that an $\epsilon$ fraction can pay for the weight updates of the affected edges and their endpoints,
          while the remaining fraction can be converted into the increase in $\phi(e)$ values of the affected edges.
          The $\alpha$ value controls the frequency of the level down events, while trading this off with the approximation guarantee.
\end{itemize}

Sections~\ref{level up} and~\ref{level down} present the details of the amortized analysis of these two types of events, respectively.
Finally, note that there is no money (potential) input to the bank $B$ after the adjustment step,
so that the analysis implies that the procedure \textsc{Fix} must stop (as the money in the bank is finite).

\subsubsection{Amortized cost of level up} \label{level up}
Let $v$ be the vertex that undergoes the level up event, and $i = \ell(v)$ denote its level at moment $t_0$.
By our notation, $\Delta B = B^{t_0} - B^{t_1}$ denotes the potential \emph{drop} in the bank $B$ from moment $t_0$ to moment $t_1$.
To show that the amortized cost of a level up event is at most zero, it is equivalent to show that $\Delta B \geq |\Delta \textsc{Count}|$.

Recall that after a level up event, only the value of $\psi(v)$, the values of $\phi(e)$ and $\psi(u)$ for an edge $(u,v)$ may be affected.
In the following, we will examine carefully the changes in such values, and derive the desired bound for $\Delta B$.
First, we have the following simple lemma.

\begin{lemma} \label{TB1}
$|\Delta \textsc{Count}| \leq D_{v}^{t_0}(0,i)$.
\end{lemma}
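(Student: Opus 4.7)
The plan is to show that during the level-up of $v$ from level $i$ to level $i+1$, the only incident edges $(u,v)$ that can possibly be touched are those with $\ell(u)\leq i$. Since the number of such edges is exactly $D_v^{t_0}(0,i)$ by definition, this immediately yields $|\Delta\textsc{Count}|\leq D_v^{t_0}(0,i)$.

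I would split the incident edges of $v$ according to the level of the other endpoint $u$. For $\ell(u)\leq i$, the edge-level $\ell(u,v)=\max\{\ell(u),\ell(v)\}$ jumps from $i$ to $i+1$, so the edge weight $\mu\beta^{-\ell(u,v)}$ changes, and the assignment together with $W_u$ may need to be recomputed. These edges clearly may contribute to $\Delta\textsc{Count}$, and there are $D_v^{t_0}(0,i)$ of them, which accounts for the right-hand side. The core task is therefore to argue that every incident edge with $\ell(u)\geq i+1$ is genuinely untouched.

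For such an edge I would verify three things. First, the edge-level $\ell(u,v)=\max\{\ell(u),\ell(v)\}$ equals $\ell(u)$ both before and after the event, since $\ell(v)$ rises only to $i+1\leq\ell(u)$; hence the edge weight is unchanged. Second, the assignment can be left at $u$: when $\ell(u)\geq i+2$, $u$ strictly dominates both before and after; when $\ell(u)=i+1$, the prior assignment was to $u$ (since $\ell(u)>\ell(v)=i$), and the post-event tie can be broken in favor of $u$. Third, $W_u$ is invariant: the only change in $u$'s neighbor statistics caused by the event is that $v$ shifts its contribution from $D_u(i)$ to $D_u(i+1)$; since both $i$ and $i+1$ lie in $\{0,\ldots,\ell(u)\}$, the aggregate $D_u(0,\ell(u))$ is preserved, while the per-level counts $D_u(j)$ with $j>\ell(u)$ appearing in the weight formula are untouched. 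Checking both Case~1 and Case~2 of the $W_v$ definition applied to $u$, this shows that $W_u$ remains the same and hence does not need to be recomputed.

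I expect the third bullet, the invariance of $W_u$, to be the only delicate step: one must explicitly observe that the weight formula reads only the aggregate $D_u(0,\ell(u))$ from levels at or below $\ell(u)$, so that an internal reshuffling of level counts in that range, like $v$'s movement from $i$ to $i+1$, cannot perturb $W_u$. Once this is in place, the lemma follows by simply counting the class of edges that actually require updating.
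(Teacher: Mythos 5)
Your proof is correct and takes essentially the same route as the paper, which simply observes that when $v$ moves from level $i$ to $i+1$, only incident edges at levels $\leq i$ can be affected. The extra care you invest in verifying the invariance of $W_u$ for $\ell(u)\geq i+1$ (that the reshuffling of $v$ between $D_u(i)$ and $D_u(i+1)$ leaves the aggregate $D_u(0,\ell(u))$ and all per-level counts $D_u(j)$, $j>\ell(u)$, untouched) is a correct and useful expansion of a detail the paper leaves implicit.
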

\begin{proof}
When $v$ changes from level $i$ to $i+1$, only those incident edges with levels $i$
will be affected.
\end{proof}

The next three lemmas examine, respectively, the changes $\Delta \psi(v)$, $\Delta \phi(e)$, and $\Delta \psi(u)$.

\begin{lemma} \label{pv1}
$\Delta\psi(v) = 0$.
\end{lemma}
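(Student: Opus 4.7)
The plan is to show that $\psi(v)$ equals zero both before and after the level up event, from which $\Delta\psi(v)=0$ follows trivially regardless of the prefactor change caused by $\ell(v)$ going from $i$ to $i+1$.

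First, I would unpack the definition of $c_v^{\ast}$: since $c_v^{\ast}=c_v/(\alpha(\beta+1))$, we have $\alpha c_v^{\ast}=c_v/(\beta+1)<c_v$. The level up event is triggered only when $v$ becomes dirty with $W_v^{t_0}>c_v$; combined with the previous inequality this gives $W_v^{t_0}>\alpha c_v^{\ast}$, so $\max\{0,\alpha c_v^{\ast}-W_v^{t_0}\}=0$ and hence $\psi(v)^{t_0}=0$. (Note also that $v$ must still be active at $t_1$ since a level up does not remove any incident edges.)

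Next, I would show that after the level up we still have $W_v^{t_1}\ge\alpha c_v^{\ast}$. The natural tool is the weight-comparison inequality already derived in the proof of Lemma~\ref{non-improvable}: for any vertex, $W_v(i)\le(\beta+1)\,W_v(i+1)$ in both the $D_v(0,i+1)\le k_v$ and $D_v(0,i+1)>k_v$ cases. Applying this with the current vertex gives
\[
W_v^{t_1}=W_v(i+1)\ge\frac{W_v(i)}{\beta+1}=\frac{W_v^{t_0}}{\beta+1}>\frac{c_v}{\beta+1}=\alpha c_v^{\ast}.
\]
Therefore $\max\{0,\alpha c_v^{\ast}-W_v^{t_1}\}=0$ and $\psi(v)^{t_1}=0$, which yields $\Delta\psi(v)=\psi(v)^{t_0}-\psi(v)^{t_1}=0$.

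The only subtlety I anticipate is bookkeeping around how $W_v$ changes at the level up boundary: the quantities entering $W_v(i)$ and $W_v(i+1)$ are computed with the levels of the neighbors held fixed, which is exactly the regime treated in Lemma~\ref{non-improvable}, so the inequality transfers verbatim. No heavy calculation is required; the argument is a direct reuse of that earlier bound together with the triggering condition of the level up. Hence the main obstacle is essentially cosmetic, namely matching notation ($W_v(i)$ versus $W_v^{t_0}$) and confirming that the $\mathrm{max}$ does not need to be inspected more finely.
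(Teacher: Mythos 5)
Your argument is correct and follows the same route as the paper: both establish $\psi^{t_0}(v)=0$ from the level-up trigger $W_v^{t_0}>c_v>\alpha c_v^\ast$, then apply the $(\beta+1)$-factor comparison $W_v(i)\le(\beta+1)W_v(i+1)$ from Lemma~\ref{non-improvable} (with neighbor levels held fixed) to conclude $W_v^{t_1}\ge W_v^{t_0}/(\beta+1)>\alpha c_v^\ast$, hence $\psi^{t_1}(v)=0$. The paper re-derives the two-case inequality inline, whereas you cite it directly, but the content is the same.
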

\begin{proof}
Since $v$ undergoes a level up event, we have $W_v^{t_0}
> c_v > \alpha\,c^\ast_v$, so that $$ \psi^{t_0}(v) = 0.$$

Next, we look at $\psi^{t_1}(v)$.  To begin with, we show a general relationship between $W_v^{t_0}$ and $W_v^{t_1}$,
similar to that in the proof of Lemma~\ref{non-improvable}.
Let $i$ denote the level $\ell(v)$ of $v$ at moment $t_0$.

\begin{description}
  \item[Case 1:] $D_{v}^{t_0}(0,i) > k_{v}$
  \begin{align*}
  W_{v}^{t_0} &=  W_{v}(i) = k_{v}\cdot\mu\beta^{-i}
    + \min\{k_v ,D_v^{t_0}(i+1)\}\cdot\mu\beta^{-(i+1)}+\sum_{j>i+1}{\min\{k_v,D_v^{t_0}(j)\}\mu\beta^{-j}} \\
  W_{v}^{t_1} &=W_{v}(i+1)=k_{v}\cdot\mu\beta^{-(i+1)}+\sum_{j>i+1}{\min\{k_v,D_v^{t_0}(j)\}\mu\beta^{-j}}\\
  &\geq \frac{1}{\beta+1}W_{v}^{t_0}
  \end{align*}
  \item[Case 2:] $D_{v}^{t_0}(0,i) \leq k_{v}$
  \begin{align*}
  W_{v}^{t_0} &=  W_{v}(i) = D_{v}^{t_0}(0,i)\cdot\mu\beta^{-i}+\min\{k_v ,D_v^{t_0}(i+1)\}\cdot\mu\beta^{-(i+1)}+\sum_{j>i+1}{\min\{k_v,D_v^{t_0}(j)\}\mu\beta^{-j}} \\
  W_{v}^{t_1} &= W_{v}(i+1)=\min\{k_v,D_v^{t_0}(0,i+1)\}\cdot\mu\beta^{-(i+1)}+\sum_{j>i+1}{\min\{k_v,D_v^{t_0}(j)\}\mu\beta^{-j}}\\
  &\geq \frac{1}{\beta+1}W_{v}^{t_0}
  \end{align*}
\end{description}
Thus, $W_v^{t_1} \geq W_v^{t_0}/(\beta+1) > c_v/(\beta+1) = \alpha\,c^\ast_v$, which implies $$\psi^{t_1}(v) = 0.$$
In summary, we have $\Delta \psi(v) = 0 - 0 = 0$ as desired.
\end{proof}

\begin{lemma}
For every edge $e$ incident to $v$,
\begin{equation*}
  \Delta\phi(e)=\begin{dcases}
                \ \left(\frac{\beta}{(\beta-1)}+\epsilon\right), & \text{if $\ell(e)\in[0,i]$.}\\
                \ 0, & \text{otherwise}.
              \end{dcases}
\end{equation*}
\end{lemma}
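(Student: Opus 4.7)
The plan is to unpack $\phi(e)$, which depends on the edge only through its level via $\phi(e) = (\beta/(\beta-1)+\epsilon)(L-\ell(e))$, and then track how $\ell(e) = \max\{\ell(u), \ell(v)\}$ changes when $v$ moves from level $i$ to level $i+1$. So the entire argument reduces to a case analysis on how $\ell(e)$ is updated for each incident edge $e = (u,v)$.

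First I would note that before the level up, every edge incident to $v$ satisfies $\ell(e) \geq \ell(v) = i$. Hence the hypothesis $\ell(e) \in [0, i]$ is equivalent to $\ell(e) = i$, which in turn forces $\ell(u) \leq i$ for the other endpoint. After $v$ is promoted, the new edge level becomes $\max\{\ell(u), i+1\} = i+1$, since $\ell(u) \leq i < i+1$. Thus $\ell(e)$ increases by exactly one and $\phi(e)$ drops by
\[
\left(\tfrac{\beta}{\beta-1}+\epsilon\right)\bigl((L-i)-(L-(i+1))\bigr) = \tfrac{\beta}{\beta-1}+\epsilon,
\]
so $\Delta\phi(e) = \beta/(\beta-1)+\epsilon$, matching the top branch.

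Next, if $\ell(e) > i$, then $\ell(e) = \ell(u) \geq i+1$. After the promotion of $v$, the new level is $\max\{\ell(u), i+1\} = \ell(u) = \ell(e)$, so $\ell(e)$ is unchanged and $\Delta\phi(e) = 0$, giving the bottom branch. This exhausts all incident edges of $v$, which completes the statement.

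The main thing to be careful about is the tie-breaking and the direction of the change: the quantity $\Delta\phi(e) = \phi^{t_0}(e) - \phi^{t_1}(e)$ is defined as the \emph{drop}, so when $\ell(e)$ grows by one, $\phi(e)$ shrinks by one coefficient of $\beta/(\beta-1)+\epsilon$, yielding a positive $\Delta\phi(e)$. This sign convention is exactly what later allows us to use the released $\phi$-potential to pay for $|\Delta\textsc{Count}|$ edge updates in Fact~2; there is no real technical obstacle in the lemma itself, the work being done purely at the bookkeeping level.
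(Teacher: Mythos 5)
Your proof is correct and follows the same route as the paper: unpack $\phi(e)$ as a function of $\ell(e)$ and track how $\ell(e)=\max\{\ell(u),\ell(v)\}$ responds to $v$ moving from level $i$ to $i+1$. You supply a bit more explicit justification (noting that $\ell(e)\geq\ell(v)=i$ always holds, so $\ell(e)\in[0,i]$ actually means $\ell(e)=i$, and that ties break the right way) than the paper's one-line version, but the computation and decomposition are identical.
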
\label{pe1}
\begin{proof}
As mentioned, only those edges that are at the level in the range $[0,i]$ are affected, so that
\begin{align*}
 \Delta\phi(u,v)
 &=\phi^{t_0}(u,v)-\phi^{t_1}(u,v) \\
 &=
 \left(\frac{\beta}{(\beta-1)}+\epsilon\right)(L-i)-\left(\frac{\beta}{(\beta-1)}+\epsilon\right)(L-(i+1))=\left(\frac{\beta}{(\beta-1)}+\epsilon\right).
\end{align*}
\end{proof}

\begin{lemma}
For every vertex $u \in N_{v}^{t_0}$, $\Delta\psi(u) \geq - \beta / (\beta-1)$.
\end{lemma}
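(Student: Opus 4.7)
The plan is to bound how much $W_u$ can decrease when $v$ promotes from level $i$ to $i+1$, and then translate this into a bound on the \emph{increase} of $\psi(u)$. Since $u$'s level does not change and $u$ remains active, the definition of $\psi(u)$ gives
\[
\psi^{t_1}(u) - \psi^{t_0}(u) \;\leq\; \frac{\beta^{\ell(u)+1}}{\mu(\beta-1)}\cdot\bigl(W_u^{t_0} - W_u^{t_1}\bigr),
\]
so it is enough to show that $W_u^{t_0} - W_u^{t_1} \leq \mu\beta^{-\ell(u)}$; the factor then collapses to $\beta/(\beta-1)$, which is exactly $-\Delta\psi(u)$ in the worst case.

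I would split the analysis according to the relation between $\ell(u)$ and $i=\ell(v)$. When $\ell(u) \geq i+1$, the neighbor $v$ moves from $N_u(i)$ to $N_u(i+1)$, but both buckets sit inside the prefix $[0,\ell(u)]$ used by the formula for $W_u$. Hence $D_u(0,\ell(u))$ is unchanged, none of the external quantities $D_u(j')$ with $j'>\ell(u)$ ever see $v$, and $W_u^{t_0}=W_u^{t_1}$. Thus $\Delta\psi(u) = 0$ in this case, which trivially meets the bound.

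The substantive case is $\ell(u) \leq i$. Here only two terms in the expression for $W_u$ are sensitive to the move of $v$: the level-$i$ contribution and the level-$(i+1)$ contribution. If $\ell(u) < i$, both sit in the external sum, and removing $v$ from level $i$ lowers that term by at most $\mu\beta^{-i}$ (by $0$ if that bucket was already saturated at $k_u$), while inserting $v$ at level $i+1$ can only raise the adjacent term, partially cancelling the loss. If $\ell(u) = i$, then the level-$i$ bucket is the ``internal'' part $D_u(0,\ell(u))$; separating by whether $D_u^{t_0}(0,\ell(u))>k_u$ or $\leq k_u$ and mirroring the proof of Lemma~\ref{non-improvable} gives $W_u^{t_0}-W_u^{t_1} \leq \mu\beta^{-\ell(u)}$, with the worst subcase being $\ell(u)=i$ together with $D_u^{t_0}(0,\ell(u))\leq k_u$.

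The main obstacle is the bookkeeping forced by the $\min\{k_u,\cdot\}$ operators: the single rebucketing of $v$ touches two terms at once, and each of them may or may not be capped, giving four sub-configurations to check. Once one verifies in each that the ``gain'' at level $i+1$ never makes the net drop worse than $\mu\beta^{-\ell(u)}$, multiplying by the prefactor $\beta^{\ell(u)+1}/(\mu(\beta-1))$ yields $\psi^{t_1}(u)-\psi^{t_0}(u)\leq \beta/(\beta-1)$, i.e.\ $\Delta\psi(u)\geq -\beta/(\beta-1)$, as required.
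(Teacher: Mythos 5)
Your proposal is correct and follows essentially the same route as the paper: split on whether $\ell(u)\geq i+1$ (no weight change, hence $\Delta\psi(u)=0$) or $\ell(u)\leq i$, and in the latter case analyze the four sub-configurations forced by the $\min\{k_u,\cdot\}$ caps at levels $i$ and $i+1$, identifying the worst drop $W_u^{t_0}-W_u^{t_1}=\mu\beta^{-i}$ (when the level-$i$ bucket is uncapped and the level-$(i+1)$ bucket is saturated) and converting it via the prefactor $\beta^{\ell(u)+1}/(\mu(\beta-1))$ into the bound $\beta/(\beta-1)$. One small technical point: the displayed inequality $\psi^{t_1}(u)-\psi^{t_0}(u)\leq \frac{\beta^{\ell(u)+1}}{\mu(\beta-1)}(W_u^{t_0}-W_u^{t_1})$ does not hold literally when $W_u$ increases (the right side can be negative while the left is zero), but since in that direction $\Delta\psi(u)\geq 0$ trivially meets the bound, your reduction to showing $W_u^{t_0}-W_u^{t_1}\leq\mu\beta^{-\ell(u)}$ is still sound.
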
\label{pu1}
\begin{proof}
If $\ell(u)\in[i+1,L]$, then $w^{t_0}(u,v)=w^{t_1}(u,v)$ and thus $\Delta w(u,v)=0$,
which implies that $\Delta\psi(u)=0$.
The potential $\psi(u)$ changes only when the vertex $u$ lies at the level in the range $[0,i]$.
Without loss of generality, we assume $\ell(u) = i$ and
prove the lemma by considering the relationship between
$k_{u}$, $D_{u}^{t_0}(0,i)$ and $D_{u}^{t_0}(i+1)$.
For those vertices $u$ with $\ell(u)\in[0,i-1]$,
we replace the term $D_{u}^{t_0}(0,i)$ to $D_{u}^{t_0}(i)$ still achieve the same result.

\begin{description}
   \item[Case 1:]$D_{u}^{t_0}(0,i)> k_{u}$, $D_{u}^{t_0}(i+1)\geq k_{u}$
   \begin{align*}
   W_u^{t_0}\ =\ W_u^{t_1}\quad\quad \Rightarrow \quad\quad \Delta\psi(u)\ =\ 0.
   \end{align*}

 \item[Case 2:]$D_{u}^{t_0}(0,i)> k_{u}$, $D_{u}^{t_0}(i+1)< k_{u}$
 \begin{align*}
 W_u^{t_1} &\ =\  W_u^{t_0}+\mu\beta^{-(i+1)} \\
\Delta\psi(u)
&\ =\ \frac{\beta^{(\ell(u)+1)}}{\mu(\beta-1)}\cdot\mu\beta^{-(i+1)}\ =\  \frac{1}{\beta-1}\cdot\beta^{\ell(u)-i}\ >\ 0.
\end{align*}


\item[Case 3:]$D_{u}^{t_0}(0,i)\leq k_{u}$, $D_{u}^{t_0}(i+1)< k_{u}$
\begin{align*}
W_u^{t_1}&\ =\ W_u^{t_0}-\mu(\beta^{-i}-\beta^{-(i+1)}) \\
\Delta\psi(u)
&\ =\ -\frac{\beta^{(\ell(u)+1)}}{\mu(\beta-1)}\cdot\mu(\beta^{-i}-\beta^{-(i+1)})\ =\ -\frac{\beta^{(\ell(u)+1)}}{\beta^{i+1}}\ \geq\ -1.
\end{align*}

\item[Case 4:]$D_{u}^{t_0}(0,i)\leq k_{u}$, $D_{u}^{t_0}(i+1)\geq k_{u}$
\begin{align*}
W_u^{t_1}&\ = W_u^{t_0}-\mu\beta^{-i} \\
\Delta\psi(u)
&\ =\ -\frac{\beta^{(\ell(u)+1)}}{\mu(\beta-1)}\cdot\mu\beta^{-i}\ =\ -\frac{\beta}{\beta-1}(\beta^{\ell(u)-i})\ \geq\ -\frac{\beta}{\beta-1}.
\end{align*}
\end{description}
\end{proof}

Based on the above lemmas, we derive the following and finish the proof for the case of level up.
\begin{align*}
\Delta B &\ =\ \frac{1}{\epsilon}\, \cdot\left(\Delta\psi(v)+\sum_{e\in E}{\Delta\phi(e)}+\sum_{u\in N_{v}^{t_0}}{\Delta\psi(u)}\right)\\
&\ \geq\ \frac{1}{\epsilon}\, \cdot\left(0+\left(\frac{\beta}{(\beta-1)}+\epsilon\right)D_{v}^{t_0}(0,i)-\frac{\beta}{\beta-1}D_{v}^{t_0}(0,i)\right)\\
&\ =\  D_{v}^{t_0}(0,i)\ \ \geq\ \ |\Delta \textsc{Count}|.
\end{align*}


\subsubsection{Amortized cost of level down} \label{level down}
We now show that the amortized cost of level down for a vertex $v$ is at most zero.
Similar to the case of level up, we examine $\Delta \psi(v)$, $\Delta \phi(e)$, and $\Delta \psi(u)$,
and show that $\Delta B \geq |\Delta \textsc{Count}|$.

Before starting the proof of the level down case, recall that we have mentioned a parameter~$h$ at the end of Introduction,
where $h$ is the largest number of selected copies of all the vertices.  That is, $ h = \max_v \{ \lceil |\delta^{t_0}(v)|/k_v \rceil \}$.
Also, we let $h' = \max_v \{ \lceil D_v^{t_0}(0,\ell(v))/k_v \rceil \}$, where $h' \geq h$, and set $\xi \geq 0$ such that $h' = h + \xi$.

\begin{lemma}\label{TB2}
$|\Delta \textsc{Count}| \leq D_{v}^{t_0}(0,i) < h' \cdot\frac{\beta^{i}c_v^\ast}{\mu}$.
\end{lemma}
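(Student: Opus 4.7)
The plan is to split the claim into its two inequalities and handle them independently, both by direct counting against the definition of $W_v$ together with the trigger condition for a level-down event.

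For the left inequality $|\Delta\textsc{Count}| \leq D_v^{t_0}(0,i)$, I would mirror the argument used to prove Lemma~\ref{TB1}. When $v$ drops from level $i$ to $i-1$, an incident edge $(u,v)$ has its weight updated only if its edge level $\max\{\ell(u),\ell(v)\}$ actually changes, which requires $\ell(u) < i$. Consequently, the set of incident edges touched has size at most $D_v^{t_0}(0,i-1) \leq D_v^{t_0}(0,i)$. Since each such edge update carries only $O(1)$ endpoint-weight updates by assumption, $|\Delta\textsc{Count}|$ is bounded by the same quantity.

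For the right inequality $D_v^{t_0}(0,i) < h'\cdot \beta^i c_v^\ast / \mu$, the main lever is that a level-down event fires only when $W_v < c_v^\ast$. I would case-split on the two branches in the definition of $W_v$. In Case~1, $D_v^{t_0}(0,i) > k_v$, so the internal component alone gives $W_v \geq k_v\mu\beta^{-i}$; combined with $W_v < c_v^\ast$, this forces $k_v < \beta^i c_v^\ast/\mu$. The definition of $h'$ yields $D_v^{t_0}(0,i) \leq h' k_v$, and multiplying the two bounds gives the claim. In Case~2, $D_v^{t_0}(0,i) \leq k_v$, the internal component becomes $D_v^{t_0}(0,i)\mu\beta^{-i}$, so $W_v < c_v^\ast$ directly gives $D_v^{t_0}(0,i) < \beta^i c_v^\ast/\mu$, which is at most $h'\beta^i c_v^\ast/\mu$ since $h' \geq 1$ whenever $v$ has any incident low-level neighbor (and the degenerate subcase $D_v^{t_0}(0,i) = 0$ is vacuous as $c_v^\ast > 0$).

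I do not expect a serious obstacle; the proof is a short mirror of Lemma~\ref{TB1} combined with the level-down trigger. The one subtlety worth flagging is the strictness of the inequality, which relies crucially on the strict inequality $W_v < c_v^\ast$ in the \textsc{Fix} procedure rather than a non-strict one. This strictness is what will let the subsequent amortized analysis absorb the factor of $h'$ into the drop in $\psi(v)$, so it is important that Case~1 uses the strict bound $k_v < \beta^i c_v^\ast/\mu$ rather than $\leq$.
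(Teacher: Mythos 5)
Your proof is correct and follows essentially the same approach as the paper: bound $|\Delta\textsc{Count}|$ by the low-level degree as in Lemma~\ref{TB1}, then combine the level-down trigger $W_v^{t_0}<c_v^\ast$ with the definition of $W_v$ and the definition of $h'$ to get the strict bound. The only difference is cosmetic: the paper case-splits on the multiplicity $\lceil|\delta^{t_0}(v)|/k_v\rceil$ being $1$ or $>1$, whereas you split directly on $D_v^{t_0}(0,i)$ versus $k_v$, which matches the definition of $W_v$ more cleanly and avoids having to tacitly identify the two conditions; the algebra in both cases is identical.
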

\begin{proof}
When the vertex $v$ moves from level $i$ to $i-1$, only those edges whose levels are at most $i$ are affected.
This shows the first part of the inequality.
Also, because $v$ undergoes level  down, we have $W_v^{t_0} < c^\ast_v$.
Then, for the latter inequality, we partition the proof into two cases:
\begin{description}
  \item[Case 1:] $\lceil|\delta^{t_0}(v)|/k_v\rceil = 1$
 \begin{align*}
 & \ W_{v}^{t_0}= D_{v}^{t_0}(0,i)\cdot\mu\beta^{-i}+\sum_{j>i}{\min\{k_v,D_v^{t_0}(j)\}\mu\beta^{-j}}\\
\Rightarrow\ &\ c_v^\ast\ >\ D_{v}^{t_0}(0,i)\cdot\mu\beta^{-i}\\
\Rightarrow\ &\ D_{v}^{t_0}(0,i)\ < \ \frac{\beta^{i}c_v^\ast}{\mu}.
  \end{align*}
  \item[Case 2:] $\lceil|\delta^{t_0}(v)|/k_v\rceil > 1  $
  \begin{align*}
  & \ W_{v}^{t_0}= k_{v}\cdot\mu\beta^{-i}+\sum_{j>i}{\min\{k_v,D_v^{t_0}(j)\}\mu\beta^{-j}}\\
\Rightarrow\ & \ c_v^\ast > k_{v}\cdot\mu\beta^{-i}\\
\Rightarrow\ & \ h' c_v^\ast > \frac{D_{v}^{t_0}(0,i)}{k_{v}}\cdot k_{v}\cdot\mu\beta^{-i}\\
\Rightarrow\ & \ D_{v}^{t_0}(0,i) < h' \cdot\frac{\beta^{i}c_v^\ast}{\mu}.
  \end{align*}
\end{description}
\end{proof}

Now, we are ready to examine $\Delta \psi(v)$, $\Delta \phi(e)$, and $\Delta \psi(u)$, through the following lemmas.

\begin{lemma}\label{pu2}
For every vertex $u\in N_{v}^{t_0}$, $\Delta\psi(u)\geq -1 /(\beta-1)$.
\end{lemma}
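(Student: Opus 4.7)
The plan is a case analysis on $\ell(u)$ relative to $i := \ell(v)$ at moment $t_0$, paralleling the structure of Lemma~\ref{pu1} but adapted to the fact that a level-down event shifts the affected edge level from $i$ down to $i-1$ (rather than from $i$ up to $i+1$). I would translate bounds on $W_u^{t_1} - W_u^{t_0}$ into bounds on $\Delta\psi(u)$ via
\[
\Delta\psi(u) \;=\; \frac{\beta^{\ell(u)+1}}{\mu(\beta-1)}\cdot\left[\max\{0,\alpha c_u^\ast - W_u^{t_0}\}\;-\;\max\{0,\alpha c_u^\ast - W_u^{t_1}\}\right],
\]
so a lower bound on $\Delta\psi(u)$ follows from an upper bound on how much $W_u$ can drop.

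First I would dispose of $\ell(u)\ge i$: the edge $(u,v)$ keeps its level $\ell(u)$ at both moments and $v$ stays inside $N_u(0,\ell(u))$ throughout, so $D_u(0,\ell(u))$ and every $D_u(j)$ with $j>\ell(u)$ are invariant, giving $\Delta W_u = 0$ and hence $\Delta\psi(u) = 0$. For $\ell(u) < i$ the edge drops to level $i-1$, and I split into two subcases. When $\ell(u)=i-1$, the vertex $v$ migrates from $N_u(i)$ (where it contributed via $\min\{k_u,D_u(i)\}\mu\beta^{-i}$ in the external sum) into $N_u(\ell(u))$ inside the internal component. A short four-way enumeration on the positions of $D_u^{t_0}(0,i-1)$ and $D_u^{t_0}(i)$ relative to $k_u$ shows that in the worst case the internal term is already capped at $k_u$ (so it cannot absorb the incoming $v$) while the external term sheds exactly $\mu\beta^{-i}$; when capacity is not saturated at level $i-1$, the internal gain $\mu\beta^{-(i-1)}$ instead dominates and $W_u$ strictly grows. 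When $\ell(u)\le i-2$, the vertex $v$ simply hops from $N_u(i)$ to $N_u(i-1)$ inside the external sum, and an analogous capacity enumeration shows the decrease is again bounded by $\mu\beta^{-i}$. Therefore $W_u^{t_1}-W_u^{t_0}\ge -\mu\beta^{-i}$ in every case with $\ell(u)<i$.

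Plugging the worst-case drop into the displayed identity,
\[
\Delta\psi(u) \;\ge\; -\,\frac{\beta^{\ell(u)+1}}{\mu(\beta-1)}\cdot \mu\beta^{-i} \;=\; -\,\frac{\beta^{\ell(u)+1-i}}{\beta-1} \;\ge\; -\,\frac{1}{\beta-1},
\]
where the last step uses $\ell(u)\le i-1$, with equality possible only when $\ell(u)=i-1$. This is exactly the factor-$\beta$ improvement over the level-up bound in Lemma~\ref{pu1}, owing to the extra $\beta^{-1}$ that comes from the edge dropping a level rather than climbing one. The main obstacle is simply bookkeeping the capacity caps $\min\{k_u,\cdot\}$ in the two subcases of Step~2, but each enumeration is finite and mirrors Lemma~\ref{pu1} almost verbatim.
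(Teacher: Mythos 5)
Your proposal is correct and follows essentially the same route as the paper: dispose of $\ell(u)\ge i$ by noting the edge level is unchanged, then for $\ell(u)<i$ use the WLOG $\ell(u)=i-1$ reduction and a four-way enumeration on $D_u^{t_0}(0,i-1)$, $D_u^{t_0}(i)$ vs.\ $k_u$ to show the worst-case drop in $W_u$ is $\mu\beta^{-i}$, yielding $\Delta\psi(u)\ge -\beta^{\ell(u)+1-i}/(\beta-1)\ge -1/(\beta-1)$. Your explanation of why the bound is a factor of $\beta$ better than in Lemma~\ref{pu1} (the affected edge drops a level rather than climbing one, so $\ell(u)\le i-1$ rather than $\ell(u)\le i$) is a nice observation that the paper leaves implicit.
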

\begin{proof}
If $\ell(u)\in[i,L]$, then $w^{t_0}(u,v)=w^{t_1}(u,v)$ and $\Delta w(u,v)=0$,
which implies $\Delta\psi(u)=0$.
The changes of potentials only occur at the vertex whose level is in the range $[0,i-1]$.
WLOG, we assume $\ell(u) = i-1$ and
we consider the relationship between
$k_{u}$, $D_{u}^{t_0}(0,i-1)$ and $D_{u}^{t_0}(i)$.
For those vertices $u$ and $\ell(u)\in[0,i-2]$,
we replace the notion $D_{u}^{t_0}(0,i-1)$ to $D_{u}^{t_0}(i-1)$ still having the same result.
  \begin{description}
    \item[Case 1:] $D_{u}^{t_0}(0,i-1)\geq k_{u}$, $D_{u}^{t_0}(i)> k_{u}$
    \begin{align*}
    W_u^{t_0}\ =\ W_u^{t_1}\quad\quad \Rightarrow\quad\quad \Delta\psi(u)\ =\ 0.
    \end{align*}
    \item[Case 2:] $D_{u}^{t_0}(0,i-1)\geq k_{u}$, $D_{u}^{t_0}(i)\leq k_{u}$
    \begin{align*}
    W_u^{t_1}
    &\ =\ W_u^{t_0}-\mu\beta^{-i}\\
    \Delta\psi(u)
    &\ =\ -\frac{\beta^{(\ell(u)+1)}}{\mu(\beta-1)}\cdot\mu\beta^{-i}\\
    &\ =\ -\frac{\beta}{\beta-1}\cdot\beta^{\ell(u)-i}
    \hspace*{0.5cm}(\because \ell(u)\leq i-1)\\
    &\ \geq \ -\frac{1}{\beta-1}.
    \end{align*}
    \item[Case 3:]$D_{u}^{t_0}(0,i-1)< k_{u}$, $D_{u}^{t_0}(i)> k_{u}$
    \begin{align*}
    &\ W_u^{t_1}= W_u^{t_0}+\mu\beta^{-(i-1)}\\
    \Rightarrow\ &\ \psi^{t_0}(u)\ >\ \psi^{t_1}(u)\quad\quad\quad \Rightarrow\quad\quad \Delta\psi(u)\ >\ 0.\\
    \end{align*}
    \item[Case 4:]$D_{u}^{t_0}(0,i-1)< k_{u}$, $D_{u}^{t_0}(i)\leq k_{u}$
    \begin{align*}
    &W_u^{t_1}= W_u^{t_0}+\mu(\beta^{-(i-1)}-\beta^{-i})\\
    \Rightarrow\ &\ \psi^{t_0}(u)\ >\ \psi^{t_1}(u)\quad\quad\quad \Rightarrow\quad\quad \Delta\psi(u)\ >\ 0.\\
    \end{align*}
  \end{description}
\end{proof}
Next, we partition $N_{v}^{t_0}$ into three subsets: $X$, $Y_1$ and $Y_2$,
i.e. $N_{v}^{t_0}=X\cup Y_1\cup Y_2$, where
\begin{align*}
X & = \{ u \mid u \in N_{v}^{t_0}(0,i-1) \}, \\
Y_1 & = \{ u \mid u \in N_{v}^{t_0}(i) \}, \\
Y_2 & = \{ u \mid u \in N_{v}^{t_0}(i+1,L) \}.
\end{align*}
\longdelete{ 
\begin{equation*}
  \begin{cases}
    X, & \text{if $u\in N_{v}^{t_0}(0,k-1)$ } \\
    Y_1, & \text{if $u\in N_{v}^{t_0}(k)$} \\
    Y_2, & \text{if $u\in N_{v}^{t_0}(k+1,L)$}
  \end{cases}
\end{equation*}
}


\begin{lemma}\label{pe2}
For every edge $(u,v)$ incidents to a vertex $v$,
\begin{equation*}
  \Delta\phi(u,v)=\begin{dcases}
                \ -\left(\frac{\beta}{(\beta-1)}+\epsilon\right), & \text{if $u\in X$}\\
                \ 0, & \text{if $u\in Y_1\cup Y_2$}.
              \end{dcases}
\end{equation*}
\end{lemma}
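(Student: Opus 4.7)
The plan is a direct unpacking of the definitions. Recall that $\phi(e)=\left(\frac{\beta}{\beta-1}+\epsilon\right)(L-\ell(e))$ and $\ell(u,v)=\max\{\ell(u),\ell(v)\}$, and that a single level-down event at $v$ changes only $\ell(v)$ (from $i$ to $i-1$), leaving every other vertex's level intact. Consequently,
\[
\Delta\phi(u,v) \;=\; \phi^{t_0}(u,v)-\phi^{t_1}(u,v) \;=\; \left(\frac{\beta}{\beta-1}+\epsilon\right)\bigl(\ell^{t_1}(u,v)-\ell^{t_0}(u,v)\bigr),
\]
so the entire lemma reduces to tracking how $\max\{\ell(u),\ell(v)\}$ shifts when $\ell(v)$ drops by one.

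Next I would split along the partition $N_v^{t_0}=X\cup Y_1\cup Y_2$, which is defined exactly so that this maximum behaves uniformly within each class. If $u\in X$, then $\ell(u)\le i-1<\ell(v)$ at both moments, so $\ell(u,v)$ equals $\ell(v)$ and therefore drops from $i$ to $i-1$; this yields $\Delta\phi(u,v)=-\left(\frac{\beta}{\beta-1}+\epsilon\right)$. If $u\in Y_1$, then $\ell(u)=i$, which equals $\ell(v)$ at $t_0$ and strictly exceeds it at $t_1$, so $\ell(u,v)=i$ at both moments and $\Delta\phi(u,v)=0$. If $u\in Y_2$, then $\ell(u)\ge i+1$ strictly dominates $\ell(v)$ at both moments, so $\ell(u,v)=\ell(u)$ is unchanged and again $\Delta\phi(u,v)=0$.

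Since each case reduces to a one-line comparison of the two endpoint levels, there is no real obstacle here. The only subtlety worth flagging is that $X,Y_1,Y_2$ are defined at moment $t_0$, and one should remember that a level-down event does not alter any neighbour's level, so this partition remains valid at moment $t_1$ and no neighbour jumps between classes during the event.
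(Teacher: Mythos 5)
Your proof is correct and follows essentially the same approach as the paper: a direct computation of $\Delta\phi(u,v)$ by tracking how $\ell(u,v)=\max\{\ell(u),\ell(v)\}$ changes when only $\ell(v)$ drops by one, split along the partition $X\cup Y_1\cup Y_2$ (the paper merges $Y_1$ and $Y_2$ into one case, but the reasoning is identical). One tiny wording slip: for $u\in X$ at moment $t_1$ the inequality $\ell(u)<\ell(v)$ need not be strict (one may have $\ell(u)=i-1=\ell(v)$), but since $\ell(u,v)$ is the maximum this does not affect the conclusion that $\ell(u,v)$ drops from $i$ to $i-1$.
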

\begin{proof}
Fix any vertex $u\in N_{v}^{t_0}$. We consider the following two possible scenarios.
\begin{description}
\item[Case 1:]$u\in Y_1 \cup Y_2$

When the level of the vertex $v$ decreases from $i$ to $i-1$,
$\ell^{t_0}(u,v)= \ell^{t_1}(u,v)$ and
thus $\phi^{t_0}(u,v) = \phi^{t_1}(u,v)$,
which implies
$\Delta\phi(u,v)=0$.

\item[Case 2:]$u\in X$

When the level of the vertex $v$ decreases from $i$ to $i-1$,
we
have
$\ell^{t_0}(u,v) = i$ and $\ell^{t_1}(u,v) = (i-1)$.
The following result is thus derived:
\begin{align*}
\Delta\phi(u,v) = \left(\frac{\beta}{(\beta-1)}+\epsilon\right)(L-i)- \left(\frac{\beta}{(\beta-1)}+\epsilon\right)(L-i+1)= -\left(\frac{\beta}{(\beta-1)}+\epsilon\right).
\end{align*}

\end{description}
\end{proof}

\medskip

Next, let $W_v^{t_0}= x + y_1 + y_2$, where $x$, $y_1$ and $y_2$ on the right-hand-side
correspond to the weights generated by the subsets $X$, $Y_1$, $Y_2$, respectively.
So, we get the following lemmas:

\begin{lemma}\label{Tpe2}
$\sum_{u\in N_{v}^{t_0}}{\Delta\phi(u,v)}\leq
-\left(\frac{\beta}{(\beta-1)}+\epsilon\right)( hx\beta^{i} / \mu ).$
\end{lemma}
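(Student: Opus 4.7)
The plan is to invoke Lemma~\ref{pe2} to obtain an exact expression for $\sum_{u \in N_v^{t_0}} \Delta\phi(u,v)$, and then to relate $|X|$ to $hx\beta^{i}/\mu$ via the decomposition $W_v^{t_0} = x + y_1 + y_2$ of the weight of $v$.

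By Lemma~\ref{pe2}, $\Delta\phi(u,v) = 0$ for every $u \in Y_1 \cup Y_2$, while $\Delta\phi(u,v) = -(\beta/(\beta-1)+\epsilon)$ for every $u \in X$. Summing over $N_v^{t_0}$ yields
\[
\sum_{u \in N_v^{t_0}} \Delta\phi(u,v) \;=\; -|X|\left(\frac{\beta}{\beta-1}+\epsilon\right),
\]
so the lemma reduces to $|X| \geq hx\beta^{i}/\mu$, equivalently $hx \leq |X|\,\mu\beta^{-i}$. Since every edge $(u,v)$ with $u \in X$ has level exactly $i$ (because $\ell(u,v) = \max(\ell(u),\ell(v)) = i$) and hence raw weight $\mu\beta^{-i}$, the right-hand side is precisely the total raw edge-weight across all $X$-edges; the inequality therefore says that $h$ times the \emph{attributed} $X$-contribution to $W_v^{t_0}$ is dominated by this raw total.

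I would close this inequality by the same two-case split used in the proof of Lemma~\ref{TB2}, based on whether the capacity cap of $v$ is active at level $i$. In the inactive case ($\lceil |\delta(v)|/k_v\rceil = 1$), the internal part of $W_v^{t_0}$ equals $D_v(0,i)\,\mu\beta^{-i}$ and the natural attribution gives $x = |X|\,\mu\beta^{-i}$; combined with $h=1$ for this $v$, the inequality holds with equality. In the active case ($\lceil|\delta(v)|/k_v\rceil > 1$), the internal part equals $k_v\,\mu\beta^{-i}$, and attributing it proportionally among the $D_v(0,i) = |X|+|Y_1|$ internal neighbors gives $x = |X|\cdot k_v\,\mu\beta^{-i}/D_v(0,i)$; combining this attribution with $D_v(0,i) \geq |\delta(v)|$ and the defining identity for $h$ then produces $hx \leq |X|\,\mu\beta^{-i}$.

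The main obstacle is the active-cap case: one must bridge the gap between the lower bound $D_v(0,i) > (h-1)k_v$ (which follows from $h = \lceil|\delta(v)|/k_v\rceil$) and the effective target $D_v(0,i) \geq hk_v$ needed to cancel the factor $h$ against the raw edge-weight. I expect this step to rely on the level-down precondition $W_v^{t_0} < c_v^\ast$ (exploited exactly as in Lemma~\ref{TB2}), together with a careful attribution of the capped internal weight to $X$ rather than a purely proportional split, in order to absorb the off-by-one slack of the ceiling in the definition of $h$.
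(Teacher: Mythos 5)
There is a genuine gap in your argument, and it traces to a sign error that you carried over faithfully from the lemma as printed. The statement as written, with ``$\leq$'', is actually a typo: since $\sum_{u \in N_v^{t_0}} \Delta\phi(u,v) = -|X|\bigl(\frac{\beta}{\beta-1}+\epsilon\bigr)$ and what the final $\Delta B$ computation needs (and what the paper's own proof establishes) is a \emph{lower} bound on this negative quantity, the inequality should read ``$\geq$''. In other words the fact being proved is $|X| \leq h x\beta^{i}/\mu$, equivalently $|X|\mu\beta^{-i} \leq hx$. You instead reduced to $|X| \geq hx\beta^i/\mu$, i.e.\ $hx \leq |X|\mu\beta^{-i}$, which is the reverse and is false in general. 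A quick sanity check: in your ``inactive'' case you have $x = |X|\mu\beta^{-i}$, so $hx\beta^i/\mu = h|X|$, and your target $|X| \geq h|X|$ fails whenever $h>1$. The ``off-by-one slack of the ceiling'' you flag as the main obstacle is not a gap to be bridged by a clever lemma --- it is the symptom of chasing an inequality that does not hold. A secondary issue: $h = \max_v \lceil |\delta^{t_0}(v)|/k_v\rceil$ is a \emph{global} maximum, so ``$h=1$ for this $v$'' is not well-defined; the paper only later (Lemma~\ref{non-dec}, after imposing $\alpha \geq \beta/(\beta-1)$) justifies replacing $h$ by $1$ in the level-down analysis, and that is a separate strengthening, not part of this lemma.

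Once the direction is fixed, the proof is short and your instinct to split on whether the cap is binding is essentially right, but the split should be on $|X|$ versus $k_v$. If $|X| \leq k_v$, one may take $x = |X|\mu\beta^{-i}$, whence $|X| = x\beta^i/\mu \leq hx\beta^i/\mu$ using only $h\geq 1$. If $|X| > k_v$, since the decomposition $W_v^{t_0}=x+y_1+y_2$ is a choice, take $y_1 = 0$ so $x = k_v\mu\beta^{-i}$, and then $|X| = \frac{|X|}{k_v}\cdot\frac{x\beta^i}{\mu} \leq \bigl\lceil |\delta^{t_0}(v)|/k_v\bigr\rceil\cdot\frac{x\beta^i}{\mu} \leq h\cdot\frac{x\beta^i}{\mu}$, using $|X| \leq |\delta^{t_0}(v)|$ (every edge to a strictly lower-level neighbor is assigned to $v$). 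No appeal to $W_v^{t_0} < c_v^\ast$ is needed.
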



\begin{proof}
We consider $|X|$ in the following two cases:
\begin{description}
  \item[Case 1:]$|X|\ \leq \ k_{v}$
  \begin{align*}
  x = |X|\cdot\mu\beta^{-i}\quad\quad \Rightarrow\quad\quad |X|=\frac{x\beta^{i}}{\mu}\leq{hx\beta^{i}}/{\mu}
  \hspace*{0.5cm}\text{($\because$ $h \ge 1$ )}
  \end{align*}
  \item[Case 2:]$|X|\ > \ k_{v}$.  Here, we may assume, WLOG, that $y_1 = 0$.  Then, we have
  \begin{align*}
   &\ x = k_{v}\cdot\mu\beta^{-i} \\
\Rightarrow\ &\ |X|=\frac{|X|}{k_{v}}\cdot\frac{x\beta^{i}}{\mu}\leq \left\lceil\frac{\delta^{t_0}(v)}{k_{v}} \right\rceil\cdot\frac{x\beta^{i}}{\mu}
\leq h\cdot\frac{x\beta^{i}}{\mu}.
  \end{align*}
\end{description}
Finally, since $$\sum_{u\in N_{v}^{t_0}}{\Delta\phi(u,v)}=|X|\cdot-\left(\frac{\beta}{(\beta-1)}+\epsilon\right),$$
the lemma thus follows.
\end{proof}

\begin{lemma}\label{pv2}
$\Delta\psi(v)=(\alpha c_v^\ast-x-y_1-y_2)\cdot\frac{\beta^{i+1}}{\mu(\beta-1)}-\max \{ 0,\alpha c_v^\ast-\beta x-y_1-y_2 \} \cdot\frac{\beta^{i}}{\mu(\beta-1)}$.
\end{lemma}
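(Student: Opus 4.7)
The plan is to evaluate $\psi^{t_0}(v)$ and $\psi^{t_1}(v)$ directly from the definition
$\psi(v)=\frac{\beta^{\ell(v)+1}}{\mu(\beta-1)}\max\{0,\alpha c_v^{\ast}-W_v\}$, and then subtract. At the moment $t_0$, $v$ is about to undergo a level-down, which by the trigger condition of \textsc{Fix} means $W_v^{t_0}<c_v^{\ast}\le \alpha c_v^{\ast}$ (since $\alpha>1$), so the $\max$ in the definition of $\psi^{t_0}(v)$ is attained by the non-zero argument. Substituting $\ell(v)=i$ together with the given decomposition $W_v^{t_0}=x+y_1+y_2$ immediately yields
\[
\psi^{t_0}(v)\ =\ \frac{\beta^{i+1}}{\mu(\beta-1)}\bigl(\alpha c_v^{\ast}-x-y_1-y_2\bigr),
\]
which is precisely the first (positive) term in the stated formula.

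The second (negative) term requires computing $W_v^{t_1}$, the weight of $v$ at the new level $i-1$. Three effects occur simultaneously: every edge from $v$ to a vertex $u\in X$ (whose level lies in $[0,i-1]$) has its edge-level drop from $i$ to $i-1$, so its unit weight $\mu\beta^{-\ell(u,v)}$ is multiplied by $\beta$; every edge to $u\in Y_1$ keeps edge-level $i$ but migrates from the \emph{internal} piece of $W_v$ into the \emph{external} sum; and every edge to $u\in Y_2$ keeps both its level and its role in $W_v$ unchanged, contributing the same $y_2$ at both moments. Tracking this bookkeeping through the two-case formula for $W_v$, together with the $\min\{k_v,\cdot\}$ caps, I would verify the identity
\[
W_v^{t_1}\ =\ \beta x + y_1 + y_2.
\]
Plugging into $\psi^{t_1}(v)=\frac{\beta^{i}}{\mu(\beta-1)}\max\{0,\alpha c_v^{\ast}-\beta x-y_1-y_2\}$ and taking the difference $\psi^{t_0}(v)-\psi^{t_1}(v)$ then gives the claim.

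The main obstacle is justifying $W_v^{t_1}=\beta x+y_1+y_2$ in the presence of the capacity cap. The internal part of $W_v^{t_0}$ equals $\min\{k_v,|X|+|Y_1|\}\mu\beta^{-i}$, which is \emph{sub-additive} in $|X|$ and $|Y_1|$; the decomposition $x+y_1$ of this internal piece must therefore be chosen precisely so that its $X$-part rescales by $\beta$ when $v$ drops a level and its $Y_1$-part re-emerges at level $i$ as the new external-sum term $\min\{k_v,|Y_1|\}\mu\beta^{-i}$. A case analysis (paralleling the four cases of Lemma~\ref{pu2}) on the relative magnitudes of $|X|$, $|Y_1|$, and $k_v$ — covering whether the internal cap binds before the move, after the move, in both, or in neither — closes the verification. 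Once this identity is in hand, the remainder of the proof is a one-line subtraction.
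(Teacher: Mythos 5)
Your overall plan matches the paper's: evaluate $\psi^{t_0}(v)$ and $\psi^{t_1}(v)$ from the definition and subtract, and the first half (that $W_v^{t_0}<c_v^{\ast}<\alpha c_v^{\ast}$ makes the $\max$ nontrivial at $t_0$) is correct. However, you leave the crux---establishing $W_v^{t_1}=\beta x+y_1+y_2$---unverified, and that identity is in fact false in general. The internal piece of $W_v^{t_0}$ is $\min\{k_v,|X|+|Y_1|\}\mu\beta^{-i}$, and the only way for the $X$-part to rescale by exactly $\beta$ (as the new internal piece at level $i-1$ is $\min\{k_v,|X|\}\mu\beta^{-(i-1)}$) is to set $x=\min\{k_v,|X|\}\mu\beta^{-i}$; that forces
\[
y_1=\bigl(\min\{k_v,|X|+|Y_1|\}-\min\{k_v,|X|\}\bigr)\mu\beta^{-i},
\]
which by subadditivity of $\min\{k_v,\cdot\}$ is \emph{strictly smaller} than the post-move $Y_1$-contribution $\min\{k_v,|Y_1|\}\mu\beta^{-i}$ whenever $|X|+|Y_1|>k_v$ and $|Y_1|>0$. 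In those cap-saturated cases $W_v^{t_1}>\beta x+y_1+y_2$, so the clean equality you plan to ``verify'' cannot be verified.

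What actually holds, and what the case analysis does give, is $W_v^{t_1}\geq\beta x+y_1+y_2$. That yields
\[
\psi^{t_1}(v)\ \leq\ \frac{\beta^{i}}{\mu(\beta-1)}\max\{0,\alpha c_v^{\ast}-\beta x-y_1-y_2\},
\]
and hence $\Delta\psi(v)\geq$ the stated right-hand side. This is also what the paper's own proof really establishes (its Case 2 likewise only bounds $W_v^{t_1}$, not pins it down), and a lower bound on $\Delta\psi(v)$ is all that is used downstream in the $\Delta B$ estimates. So the fix to your proposal is twofold: replace the ``$=$'' you are chasing with ``$\geq$,'' and actually carry out the case split on $|X|,|Y_1|,k_v$ to prove $W_v^{t_1}\geq\beta x+y_1+y_2$ rather than asserting it.
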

\begin{proof}
We have
$W_{v}^{t_0}=x+y_1+y_2< c_v^\ast$, and
we have to consider the following relationship between $x+y_1$ and $k_{v}\cdot\mu\beta^{-i}$.
With the above relationship, we compute $W_v^{t_1}$ by the following:
\begin{description}
\item[Case 1] $|X|<k_{v}$ and $|X+Y_1|\leq k_{v}$:
\begin{align*}
W_{v}^{t_0}=W_{v}(i) &=  D_{v}^{t_0}(0,i)\mu\beta^{-(i)}+\sum_{j>i}{\min\{k_v,D_v^{t_0}(j)\}\mu\beta^{-j}};\\
W_{v}^{t_1}=W_{v}(i-1) &= D_{v}^{t_0}(0,i-1)\mu\beta^{-(i-1)}+D_{v}^{t_0}(i)\mu\beta^{-(i)}+\sum_{j>i+1}{\min\{k_v,D_v^{t_0}(j)\}\mu\beta^{-j}}\\
&= \beta x+y_1+y_2;
\end{align*}
\item[Case 2] $|X+Y_1| > k_{v}$:
\begin{align*}
W_{v}^{t_0}=W_{v}(i) &= k_{v}\mu\beta^{-(i)} + \sum_{j>i}{\min\{k_v,D_v^{t_0}(j)\}\mu\beta^{-j}}; \\
W_{v}^{t_1}=W_{v}(i-1) &= \min\{k_v,D_{v}^{t_0}(0,i-1)\}\mu\beta^{-(i-1)}+\min\{k_v,D_{v}^{t_0}(i)\}\mu\beta^{-(i)}+\sum_{j>i}{\min\{k_v,D_v^{t_0}(j)\}\mu\beta^{-j}}\\
&\leq (\beta+1) x+y_1+y_2;
\end{align*}
\end{description}
By the above cases, we have a weight change of at least $\beta x+y_1+y_2$ in $W_v$.
The desired bound on $\Delta \psi(v)$ can thus be obtained by direct substitution.
\end{proof}

Finally, depending upon the value of $\alpha c_v^\ast-\beta x-y_1-y_2$,
we consider two possible scenarios,
where we show that in each case, $\Delta B\geq h' \cdot\beta^{i}c_v^\ast / \mu$. This in turn implies $\Delta B \geq |\Delta \textsc{Count}|$
as desired.
\begin{description}
  \item[Case 1:]$\alpha c_v^\ast\leq\beta x+y_1+y_2$
  \begin{align*}
\epsilon\cdot\Delta B&=\left(\sum_{u\in N_{v}^{t_0}}{\Delta\psi(u)}+\sum_{e\in E}{\Delta\phi(e)}+\Delta\psi(v)\right)\\
  &\geq-\frac{1}{\beta-1}\cdot \frac{hx\beta^{i}}{\mu}-\left(\frac{\beta}{(\beta-1)}+\epsilon\right)\cdot\frac{hx\beta^{i}}{\mu}+(\alpha c_v^\ast- x-y_1-y_2)\cdot\frac{\beta^{i+1}}{\mu(\beta-1)}\\
  &\geq\frac{\beta^i}{\mu}\left(-\frac{1}{\beta-1}hc_v^\ast-\left(\frac{\beta}{(\beta-1)}+\epsilon\right)hc_v^\ast+\frac{(\alpha-1)\beta}{\beta-1}c_v^\ast\right)
  \hspace*{0.5cm}\text{($\because c_v^\ast \geq x+y_1+y_2 $)}\\
  &=\frac{\beta^{i}c_v^\ast}{\mu(\beta-1)}((\alpha-1)\beta-h-(\beta+(\beta-1)\epsilon)h)\\
  &=\frac{\beta^{i}c_v^\ast}{\mu}\left(\frac{(\alpha-1)\beta}{(\beta-1)}-h\left(\frac{\beta+1}{\beta-1}+\epsilon\right)\right)
  \hspace*{0.5cm}\text{if let $\alpha = \frac{\beta-1}{\beta}\left(h\left(\frac{\beta+1}{\beta-1}+2\epsilon\right)+\xi\epsilon\right)+1$}\\
  &\geq \epsilon h'\cdot\frac{\beta^{i}c_v^\ast}{\mu}.
  \end{align*}
  \item[Case 2:]$\alpha c_v^\ast>\beta x+y_1+y_2$
  \begin{align*}
  \epsilon\cdot\Delta B &=\left(\sum_{u\in N_{v}^{t_0}}{\Delta\psi(u)}+\sum_{e\in E}{\Delta\phi(e)}+\Delta\psi(v)\right)\\
  &\geq-\frac{1}{\beta-1}\cdot\frac{hx\beta^{i}}{\mu}-\left(\frac{\beta}{(\beta-1)}+\epsilon\right)\cdot\frac{hx\beta^{i}}{\mu}+(\alpha c_v^\ast- x-y_1-y_2)\cdot\frac{\beta^{i+1}}{\mu(\beta-1)}\\
  &-(\alpha c_v^\ast-\beta x-y_1-y_2)\cdot\frac{\beta^{i}}{\mu(\beta-1)}\\
  &=\frac{\beta^{i}}{\mu(\beta-1)}\cdot(-xh-(\beta+(\beta-1)\epsilon)xh+\alpha(\beta-1)c_v^\ast-(\beta-1)(y_1+y_2))\\
  &=\frac{\beta^{i}}{\mu(\beta-1)}\cdot(\alpha(\beta-1)c_v^\ast-(\beta+1+(\beta-1)\epsilon)xh-(\beta-1)(y_1+y_2))\\
  &\geq\frac{\beta^{i}c_v^\ast}{\mu}\cdot\left(\alpha-h\left(\frac{\beta+1}{\beta-1}+\epsilon\right)\right)
  \hspace*{0.5cm}\text{if let $\alpha = \frac{\beta-1}{\beta}\left(h\left(\frac{\beta+1}{\beta-1}+2\epsilon\right)+\xi\epsilon\right)+1$}\\
  &\geq\epsilon h'\cdot\frac{\beta^{i}c_v^\ast}{\mu}.
  \end{align*}
\end{description}
Thus, the level scheme remains $\alpha(\beta+1)$-tight after a level down event. However, the value of $h$ is bounded by $n$,
and $h$ appears inside $\alpha$, so that the approximation ratio of the scheme may become $n$ in the worst-case.
Fortunately, with the help of the following lemma, we can choose $\alpha$ carefully, which in turn
improves the approximation ratio from $n$ to $O(1)$.

\begin{lemma}\label{non-dec}
Suppose that we set $\alpha \geq \beta/(\beta-1)$.  By the time a level down event occurs at $v$ at moment $t_0$,
exactly one copy of $v$ is selected.  That is, $\lceil |\delta^{t_0}(v)| / k_v \rceil = 1$.
\end{lemma}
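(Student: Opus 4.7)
The plan is to argue the contrapositive. Suppose, for contradiction, that at moment $t_0$ the vertex $v$ is about to perform a level-down while $D_v^{t_0}(0,\ell^{t_0}(v)) > k_v$ (which is the case whenever $\lceil |\delta^{t_0}(v)|/k_v \rceil \geq 2$, since $|\delta^{t_0}(v)| \leq D_v^{t_0}(0,\ell^{t_0}(v))$). Then \emph{Case~1} of the weight formula is in effect at $t_0$ and $W_v^{t_0} \geq k_v \mu \beta^{-\ell^{t_0}(v)}$. The goal is to push this lower bound above $c_v^\ast$, contradicting the level-down trigger $W_v^{t_0} < c_v^\ast$.

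The key is to control $k_v \mu \beta^{-\ell^{t_0}(v)}$ by looking backwards. Because $\ell^{t_0}(v) \geq 1$ and $v$ started at level $0$ in the empty graph, there must be a most recent moment $t^\star < t_0$ at which $v$ underwent a level-up event; write $\ell'$ for $v$'s new level immediately after. Since $t^\star$ is the last level-up, on $[t^\star,t_0]$ the level of $v$ can only decrease, so $\ell^{t_0}(v) \leq \ell'$ and therefore $k_v \mu \beta^{-\ell^{t_0}(v)} \geq k_v \mu \beta^{-\ell'}$.

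Just before $t^\star$ the level-up trigger gives $W_v(\ell'-1) > c_v$, and the weight-ratio computation used in Lemma~\ref{non-improvable} yields $W_v(\ell') \geq W_v(\ell'-1)/(\beta+1) > c_v/(\beta+1)$ right after the level-up. On the other hand, the internal component of $W_v(\ell')$ is at most $k_v \mu \beta^{-\ell'}$ in either case of the weight definition, while the inequality $\mathit{External}_v \leq k_v \mu \beta^{-\ell'}/(\beta-1)$ shown right after the definition of $W_v$ bounds the external component; combining these gives $c_v/(\beta+1) < W_v(\ell') \leq k_v \mu \beta^{-\ell'} \cdot \beta/(\beta-1)$, hence
\[
k_v \mu \beta^{-\ell^{t_0}(v)} \;\geq\; k_v \mu \beta^{-\ell'} \;>\; \frac{c_v(\beta-1)}{\beta(\beta+1)}.
\]
The hypothesis $\alpha \geq \beta/(\beta-1)$ then gives $c_v^\ast = c_v/(\alpha(\beta+1)) \leq c_v(\beta-1)/(\beta(\beta+1))$, so $W_v^{t_0} \geq k_v \mu \beta^{-\ell^{t_0}(v)} > c_v^\ast$, contradicting the level-down trigger.

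The main subtlety is recognizing that although $D_v$, $W_v$, the edge weights incident to $v$, and even $|\delta(v)|$ can all fluctuate on $[t^\star,t_0]$ due to neighbor movements, tie-breaking changes, and incident-edge insertions/deletions, the single quantity $k_v \mu \beta^{-\ell(v)}$ is tied to $v$'s own level, which is monotone non-increasing on this interval. This is what lets one export the lower bound produced by the level-up trigger at $t^\star$ all the way forward to the moment of the level-down, despite having essentially no direct control over the intervening dynamics.
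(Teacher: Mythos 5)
Your proof is correct and uses the same two inequalities that drive the paper's argument: $W_v^{t_0} \geq k_v\mu\beta^{-\ell^{t_0}(v)}$ when more than one copy of $v$ is selected, and the level-up trigger combined with $\mathit{External}_v \leq \mathit{Internal}_v/(\beta-1)$ forces $k_v\mu\beta^{-\ell'} > c_v(\beta-1)/(\beta(\beta+1))$, from which the contradiction with $\alpha \geq \beta/(\beta-1)$ follows. Your bookkeeping via the last level-up moment $t^\star$ and the monotonicity observation $\ell^{t_0}(v) \leq \ell'$ is a slightly tidier way to export the bound forward in time than the paper's case analysis on what caused the weight to decrease, but the substance is the same.
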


\begin{proof}
Assume to the contrary that $v$ could decrease its level even if more than one copy of $v$ is selected.
Since $v$ undergoes level down, its weight $W_v$ must have decreased; this can happen only in one of the following cases:

\begin{description}
  \item[Case 1:] An incident edge whose level is in the range $[0,\ell(v)]$ is deleted.  In this case,
  since more than one copy of $v$ is selected, $W_v$ is unchanged.  Thus, this case cannot happen.

  \item[Case 2:] An incident edge whose level is in the range $[\ell(v)+1,L]$ is deleted.
  In this case, the weight $W_v^{t_0}$ at moment $t_0$ is less than $c^\ast_v$.
  On the other hand, at the moment $t'$ when $v$ attains the current level $\ell(v)$ (from level $\ell(v)-1$),
  its weight $W_v^{t'}$ was at least $c_v$ before level up, and became at least $c_v/(\beta+1)$ after the level up.
  (The reason is from the proof of Lemma~\ref{non-improvable}:  the weight change between consecutive levels is at most a factor of $\beta+1$.)
  This implies that:
  \begin{align*}
  c^\ast_v\ &\ >\ W_v^{t_0}\ \geq\ k_v \mu\beta^{-\ell(v)}  \hspace{0.5cm}\because\ \mbox{more than one copy of $v$ is selected}\\
  (\beta/(\beta-1)) k_v \mu\beta^{-\ell(v)} &\ \geq W_v^{t'}\ \geq\ c_v/(\beta+1)  \hspace{0.5cm}\because\ \mbox{left bound is max possible $W_v$ value}
  \end{align*}
  Combining, we would have
  $$ \frac{c_v}{\alpha(\beta+1)}\ =\ c^\ast_v\ >\ k_v \mu\beta^{-\ell(v)}\ \geq\ \frac{c_v(\beta-1)}{\beta(\beta+1)},  $$
  so that $ \alpha < \beta/(\beta-1)$.  A contradiction occurs.
\end{description}
Thus, the lemma follows.
\end{proof}

\longdelete{
Also, since more than one copy of $v$ is selected, $W_v^{t_0} = k_v\mu\beta^{-\ell(v)} + \textit{External}_v \leq (\beta/(\beta-1))k_v\mu\beta^{-\ell(v)}$.
On the other hand, from the argument in the proof of Lemma~\ref{non-improvable}, we see that the weight between setting $v$ at consecutive levels is at most a factor
of $\beta+1$, so that $W_v^{t_0} > c_v/(\beta+1)$.
We bound the value of $k_v\mu\beta^{-\ell(v)}$ by the following:
\begin{align*}
  &\because External_v\leq (\beta/(\beta-1))k_v\mu\beta^{-\ell(v)}\\
  &\Rightarrow k_v\mu\beta^{-\ell(v)}(1+\frac{\beta}{(\beta-1)})\geq W_v\geq c_v/(\beta+1)\\
  &\Rightarrow k_v\mu\beta^{-\ell(v)}\geq \frac{(\beta-1) c_v}{(2\beta-1)(\beta+1)}
\end{align*}
By the Invariant~\ref{inv}, we have $\ell(v)\leftarrow\ell(v)-1$ if $W_v < c_v/ \alpha(\beta+1)$.
There are two operations which cause the decrement of the vertex level:

\begin{description}
  \item[Case 1:] Delete the incident edge whose level is in the range $[0,\ell(v)]$ \\
  Due to more than one copy of the vertex we had been selected, the vertex weight is unchanged.
  So in this case, the vertex $v$ didn't decrease its level.
  \item[Case 2:] Delete the incident edge whose level is in the range $[\ell(v)+1,L]$.
  In this case, the vertex's weight $W_v$ decreased due to the edge-deletion.
  Fortunately, by the above analysis we have $k_v\mu\beta^{-\ell(v)}\geq \frac{(\beta-1) c_v}{(2\beta-1)(\beta+1)}= \frac{c_v}{\alpha(\beta+1)} $ which still satisfies the Invariant~\ref{inv}.
\end{description}
}

The above lemma states that if we choose $\alpha \geq \beta/(\beta-1)$, then level down of $v$ occurs only when
$\lceil |\delta^{t_0}(v)|/k_v \rceil$ is one.
Then, Case 2 inside the proof of Lemma~\ref{TB2} will not occur, so that we can strengthen Lemma~\ref{TB2} to get
$|\Delta \textsc{Count}| \leq D_v^{t_0}(0,i) < \beta^i c^*_v/\mu$.
Similarly, the proof of Lemma~\ref{Tpe2} can be revised, so that we can strengthen Lemma~\ref{Tpe2} by replacing $h$ with one.
On the other hand, we need $\alpha \geq (2\beta+1)/\beta+2\epsilon$ to satisfy the amortized cost analysis.
Consequently, we set $\alpha = (2\beta+1)/\beta+2\epsilon$,
and we can achieve the desired bound $\Delta B \geq \beta^i c_v^\ast/\mu \geq |\Delta \textsc{Count}|$.
The proof for the level down case is complete.

\subsection{Summary and extensions}
With the appropriate setting of $\alpha = (2\beta+1)/\beta+2\epsilon$, where $0 < \epsilon < 1$,
we get an $\alpha(\beta+1)$-tight level scheme. Then, by setting $\beta = 2.43$, Theorem~\ref{Big} is proven
so that we get an approximation solution of ratio close to 36 with $O((\log n)/\epsilon)$ amortized update time.
Note that if we focus on the non-capacitated case, that is, each vertex is weighted and has unlimited capacity,
the problem becomes the \emph{weighted} vertex cover problem.  Our dynamic scheme can easily be adapted
to maintain an approximate solution, based on the following changes.
First, we define the weight of a vertex $W_v$ as $W_v = \sum_{e\sim v}{\mu\beta^{-\ell(e)}}$.
Next, we let $\alpha = 1+3\epsilon$ and $\beta = 1+\epsilon$
and revise $\phi(e)$ as  $\phi(e) = \left(1+\epsilon\right)(L-\ell(e))$.
\longdelete{
\begin{align*}
\psi(v)= \begin{dcases}
             \ \frac{\beta^{(\ell(v)+1)}}{\mu(\beta-1)}\cdot \max\, \{0,\alpha\, c_v^\ast-W_{v} \}, & \text{if $v$ is \emph{active}}. \\
             \ 0, & \text{otherwise.}
           \end{dcases}
\end{align*}
}
After these changes, we can go through a similar analysis, and obtain a $(2+\epsilon)$-approximate weighted vertex cover
with $O(\log n / \epsilon^2)$ amortized update time.

Finally, we consider two natural extensions of the capacitated vertex cover problem,
and show how to adapt the proposed level scheme to handle these extensions

\noindent
\textbf{Capacitated set cover.}
Here, we consider the capacitated set cover problem
which is equivalent to the capacitated vertex cover problem in hyper-graphs.
A hyper-graph $G = (V,E)$ has $|V| = n$ vertices and $|E| = m$ hyper-edges, where each hyper-edge is incident to
a set of vertices.  Suppose that each hyper-edge is incident to at most $f$ vertices.
Our target is to find a subset of vertices, each with a certain number of copies, so that every edge in $E$ is covered,
while the total cost of the selected vertices (each weighted by the corresponding number of copies) is minimized.
Here, we treat the hyper-graph vertex cover problem as if the original vertex cover problem, and
use the same level scheme and the definition of the weight of a vertex $W_v$.
That is, the weight $W_v$ of a vertex $v$ is defined as follows:
\begin{description}
  \item[Case 1] $D_v(0,\ell(v))>k_v$:
  \begin{align*}
  W_v =  k_v \mu\beta^{-\ell(v)}+\sum_{i>\ell(v)}{\min\{k_v,D_v(i)\}\mu\beta^{-i}}
  \end{align*}
  \item[Case 2] $D_v(0,\ell(v))\leq k_v$:
  \begin{align*}
  W_v = D_v(0,\ell(v))\mu\beta^{-\ell(v)}+\sum_{i>\ell(v)}{\min\{k_v,D_v(i)\}\mu\beta^{-i}}
  \end{align*}
\end{description}
We also use the same conditions for level up and level down.
However, we still need to do some adjustments for this problem.
First, we re-design the number of levels, $L$, to be $\lceil \log_{\beta}(m\mu\alpha / c_{\min})\rceil$.
Next, we adjust the flexible range by multiplying it by $f$ so that $W_v \in (c_v/f\varepsilon, c_v]$.
In Lemma~\ref{vareps-tight}, we have proved that if there are more than $k_v$ edges assigned to a vertex $v$, then every edge is accounted for at most $2(\beta/(\beta-1))\pi_e$
But here, a hyper-edge $e$ may be incident to at most $f$ vertices so that the total cost for a hyper-edge is bounded by at most $(2(\beta/(\beta-1))+(f-1))\pi_e$ instead.
Combining these with the new flexible range of $W_v$, the approximation ratio of our scheme for the capacitated set cover problem is $O(f^2)$.

When we consider the updated time in the dynamic setting,
\longdelete{
by~\cite{BHI2} the authors have proved the following argument with their proposed level scheme:
\begin{theorem}\cite{BHI2}
We can maintain an $O(f^2 + f + \epsilon f^2)$-approximation solution to the dynamic set cover
problem in $O(f\cdot \log(m+n)/\epsilon)$ amortized update time.
\end{theorem}
}
we modify our potential function as follows:
\begin{align*}
\phi(e) = \left(\frac{\beta}{(\beta-1)}+\epsilon\right)(L-\ell(e)).
\end{align*}

\begin{align*}
\psi(v)= \begin{dcases}
             \ \frac{\beta^{(\ell(v)+1)}}{f\mu(\beta-1)}\cdot \max\, \{0,f\alpha\, c_v^\ast-W_{v} \}, & \text{if $v$ is \emph{active}}. \\
             \ 0, & \text{otherwise.}
           \end{dcases}
\end{align*}
By following the arguments and proofs in Section 3 analogously,  we can prove that our scheme achieves $O(f \log(m + n)/\epsilon)$ amortized update time.

\noindent
\textbf{Capacitated vertex cover with non-uniform unsplittable demand.}
In this part, we extend the capacitated vertex cover problem to a more general case in which each edge has an unsplittable demand. That is, the demand of each edge must be covered by exactly one of its endpoints.
We first show that, with some modification, our approach in Section 2 is ready to give an $O(1)$-approximate solution for the general case in the static setting.
Firstly, when we consider the general case, we have to revise the capacity constraint in the primal problem to $k_v x_v - \sum_{e\in N_v}{y_{ev}d_e} \geq 0$, and we also have to change the vertex constraint in the dual problem
to $q_vd_e+l_{ev}\geq \pi_e$.

To cope with these changes, we will revise the number of levels of our level scheme to be $L = \lceil \log_{\beta}(k_{\rm max}\mu\alpha / c_{\min})\rceil$, where $k_{\rm max}$ denotes the maximum capacity of a vertex.
Moreover, we adjust our definition of the weight  $W_v$ of a vertex as follows:

\begin{description}
  \item[Case 1] $\sum_{e\, \mid\, e\sim v , \ell(e)=\ell(v)}{d_e} > k_v$:
  \begin{align*}
  W_v =  k_v \mu\beta^{-\ell(v)}+\sum_{j\, \mid\, \ell(e) = j >\ell(v)}{\min\{k_v,\sum_{e}{d_e}\}\mu\beta^{-j}}
  \end{align*}
  \item[Case 2] $\sum_{e\, \mid\, e\sim v , \ell(e)=\ell(v)}{d_e}\leq k_v$:
  \begin{align*}
  W_v = \sum_{e\, \mid\, e\sim v, \ell(e)=\ell(v)}{d_e} \mu\beta^{-\ell(v)}+\sum_{j\, \mid\, \ell(e) = j > \ell(v)}{\min\{k_v,\sum_{e}{d_e}\}\mu\beta^{-j}}
  \end{align*}
\end{description}
where $e \sim v$ denotes $e$ is an edge incident to $v$.

Due to the change of the mathematical model in both primal and dual problems, we need a slightly different strategy from that in Section 2.
 We use the total demand of the unassigned edges to replace the number of unassigned edges to determine the value of $q_v$ and $l_{ev}$.
In particular:

\begin{description}
    \item If $\lceil\sum_{e\in\delta(v)}{d_e}/k_v\rceil > 1$:   $q_v = \mu \beta^{-\ell(v)}$,  and $l_{ev} = 0$;
    \item If $\lceil\sum_{e\in\delta(v)}{d_e}/k_v\rceil \leq 1$:   $q_v = \mu \sum_{ i\, \mid\, \sum_{\ell(e)=i}{d_e}\, \geq\, k_v }  \beta^{-i}$,
                                                                                  $l_{ev} = 0$ if $\sum_{\ell(e)=i}{d_e}\, \geq\, k_v$, and $l_{ev} = d_e\cdot\mu \beta^{-\ell(e)}$ otherwise.
    \item For every edge $e$:  $\pi_e = d_e\cdot\mu\beta^{-\ell(e)}$.
\end{description}

Then, we use the same technique as in Section 2, and it is easy to verify that the above choices of $q_v$, $l_{ev}$, and $\pi_e$ give a feasible solution to the dual problem.
Again, for the total cost of our solution, we separate the analysis into two parts, based on the multiplicity of the vertex $v$:

\begin{description}
  \item [Case 1] $\lceil\sum_{e\in\delta(v)}{d_e}/k_v\rceil> 1$:
                             In this case, the external component of $W_v$ is at most $1/(\beta-1)$
                             of the internal component, so that $W_v \leq (\beta/(\beta-1))k_v q_v$.
                             Then, the cost of all copies of $v$ is:
  \begin{align*}
\left\lceil \sum_{e\in\delta(v)}{d_e}/k_v \right\rceil\cdot c_v &\leq \left\lceil\sum_{e\in\delta(v)}{d_e}/k_v\right\rceil\cdot \varepsilon \cdot W_v\\
    &\leq 2 \cdot \frac{ \sum_{e\in\delta(v)}{d_e}}{k_v} \cdot \varepsilon \cdot (\beta/(\beta-1)) k_v q_v\\
    &= 2(\beta/(\beta-1))\varepsilon\cdot\sum_{e\in\delta(v)}{{d_e}{q_v}}\
        =\ 2(\beta/(\beta-1))\varepsilon\cdot\sum_{e\in\delta(v)}{\pi_e}.
  \end{align*}
  \item[Case 2] $\lceil\sum_{e\in\delta(v)}{d_e}/k_v\rceil\ =  1$:   In this case, we pick one copy of vertex $v$, whose cost is:
  \begin{align*}
  c_v&\leq \varepsilon\cdot W_v \\
  &\leq \varepsilon \cdot \sum_{e \sim v} \pi_e\ =\ \varepsilon \cdot \left(\sum_{e\in\delta(v)}{\pi_e}+\sum_{e\notin\delta(v),\, e \sim v}{\pi_e} \right),\\
\end{align*}
\end{description}
In this case, though every edge multiplies its own demand, the selected copies also multiplies the same constant. Through the analysis we have verified that even each edge has its own demand,
the approximation ratio of the proposed algorithm remains unchanged as it is for the uniform case, in the static setting.

\smallskip

\noindent
{\bf Open problems.}
Unfortunately, when we consider the dynamic operation, an edge insertion or deletion may cause a vertex to adjust its level severely because the edge weight in this case connects to its demand.
Using analogous arguments in this paper fails to bound the amortized time for the update event.
It is open whether we can maintain
a constant approximation ratio
with polylogarithmic update time for this general problem where edges have non-uniform unsplittable demands.

\smallskip

\noindent
{\bf Two simple alternatives.}  Here, however, we still present two simple approaches for this problem by combining other techniques with the original proposed level scheme.

The first approach is to partition all of edges into $\log_2 (d_{\rm max})$ clusters according to its demand (where the $i$th cluster contains edges with demand in the range $[2^{i-1}, 2^i)$),
and maintain each cluster by its own data structure.
In every cluster, we set value of $\alpha = 2((2\beta+1)/\beta+2\epsilon)$.
Every time where there is an edge insertion or edge deletion, we run the proposed algorithm in the corresponding cluster.
That is, only the data structure of one cluster is updated per each edge update event.
For the output, we simply select the vertices, and their corresponding number of copies, in each of the cluster to cover all the edges in that cluster.
After these changes, we obtain an $O(\log d_{\rm max})$ approximation ratio solution with $O(L/\epsilon)=O(\log k_{\rm max}/\epsilon)$ update time, where $d_{\rm max}= \max_e\{d_e\}$, $k_{\rm max} = \max_v\{k_v\}$.

The second approach works for integral demands.  We will view an edge $e$ with demand $d$ as $d$ edges $e_1, e_2, \ldots, e_d$ with uniform demand
between the same endpoints.  Then, we will execute the proposed level scheme.   The only problem is that those
edges $e_1, e_2, \ldots, e_d$ corresponding to the original edge $e$ may be assigned to the different endpoints.
What we will do is simply assign \emph{all} edges to the endpoint that is covering the majority of these edges, based on the solution in the proposed level scheme.
After that, the total cost will be increased by at most a factor of $2$, so that
we obtain an O(1)-approximate solution with the $O(d_{\rm max}L/\epsilon)=O(d_{\rm max}\log k_{\rm max}/\epsilon)$ amortized update time.

\section{Concluding Remarks}
We have extended dynamic vertex cover to the more general WMCVC problem, and
developed a constant-factor dynamic approximation algorithm
with $O(\log n / \epsilon)$ amortized update time, where $n$ is the number of the vertices.
Note that, with minor adaptions to the greedy algorithm reported in Gupta et al.'s very recent paper~\cite{GKKP} is
also able to work for the dynamic capacitated vertex cover problem,
but only to obtain a \emph{logarithmic}-factor approximation algorithm with
$O(\log n)$ amortized update time.
Moreover, our proposed algorithm can also be extended to solve the (soft) capacitated set cover problem,
and the capacitated vertex cover problem with non-uniform unsplittable edge demand.

We conclude this paper with some open problems.
First,
recall that in the \emph{static} model,  the soft capacitated vertex cover problem~\cite{GHKO}
can be approximated within a factor of two and three
for the uniform and non-uniform edge demand cases, respectively.
Here, we have shown that it is possible to design
a dynamic scheme with $O(1)$ approximation ratio
with polylogartihmic update time for the uniform edge demand case.
Thus, designing an $O(1)$-approximation ratio algorithm with $O(\log k_{\rm max})$, or polylogarithmic, update time
for the non-uniform edge demand case seems promising.

Moreover, it would also be of significant interest to explore whether it is possible to derive a constant approximation ratio
for the WMCVC problem under constant update time.
Also, in recent years, more studies on the worst-case update time for dynamic algorithms have been conducted.
It would be worthwhile to examine update time in the worst-case analysis.

\newpage

\end{document}